\newtheorem{assumption}{Assumption}
\newtheorem{definition}{Definition}
\newtheorem{theorem}{Theorem}
\newtheorem{lemma}{Lemma}
\def\endfigure{\end@float}
\definecolor{bleudefrance}{rgb}{0.19, 0.55, 0.91}
\definecolor{ao(english)}{rgb}{0.0, 0.5, 0.0}
\newcommand{\addcite}[0]{\ifthenelse{\boolean{showcomments}}
{\textcolor{purple}{(add cite(s)) }}{}}%
\newcommand{\addcites}[0]{\ifthenelse{\boolean{showcomments}}
{\textcolor{purple}{(add cite(s)) }}{}}%
\newcommand{\addref}[0]{\ifthenelse{\boolean{showcomments}}
{\textcolor{purple}{(add ref) }}{}}%
\newcommand{\enrique}[1]{  \ifthenelse{\boolean{showcomments}}
{\todo[inline,color=bleudefrance,caption={}]{Enrique: #1}}{}}
\newcommand{\rene}[1]{  \ifthenelse{\boolean{showcomments}}
{\todo[inline,color=cyan]{Ren\'e: #1}}{}}
\newcommand{\emmargin}[1]{\ifthenelse{\boolean{showcomments}}{\marginpar{\color{bleudefrance}\tiny EM: #1}}{}}
\newcommand{\hancheng}[1]{  \ifthenelse{\boolean{showcomments}}
{\todo[inline,color=orange]{Hancheng: #1}}{}}
\newcommand{\ziqing}[1]{  \ifthenelse{\boolean{showcomments}}
{\todo[inline,color=red]{Ziqing: #1}}{}}
\newcommand{\salma}[1]{  \ifthenelse{\boolean{showcomments}}
{\todo[inline,color=yellow]{Salma: #1}}{}}
\newcommand{\zxmargin}[1]{\ifthenelse{\boolean{showcomments}}{\marginpar{\color{purple}\tiny ZX: #1}}{}}
\newcommand{\stmargin}[1]{\ifthenelse{\boolean{showcomments}}{\marginpar{\color{red}\tiny ST: #1}}{}}
\newcommand{\jixian}[1]{  \ifthenelse{\boolean{showcomments}}
{\todo[inline,color=lightgray]{Jixian: #1}}{}}
\newcommand{\hl}[1]{\ifthenelse{\boolean{showcomments}}
{\textcolor{red}{#1}}{#1}}
\newcommand{\aem}[1]{
\ifthenelse{\boolean{showedits}}
{\added[id=EM]{#1}}
{\!#1\hspace{-4.75pt}}
}
\newcommand{\repem}[2]{
\ifthenelse{\boolean{showedits}}
{\replaced[id=EM]{#1}{#2}}
{\!#1\hspace{-4.75pt}}
}
\newcommand{\dem}[1]{
\ifthenelse{\boolean{showedits}}
{\deleted[id=EM]{#1}}
{}
}
\newcommand{\K}{\mathcal{K}}
\newcommand{\X}{\mathcal{X}}
\newcommand{\U}{\mathcal{U}}
\newcommand{\R}{\mathcal{R}}
\newcommand\restr[2]{{
  \left.\kern-\nulldelimiterspace 
  #1 
  \littletaller 
  \right|_{#2} 
  }}
\newcommand{\littletaller}{\mathchoice{\vphantom{\big|}}{}{}{}}
\title{\LARGE \bf 
Recurrent Control Barrier Functions: A Path Towards Nonparametric Safety Verification
}
\author{Jixian Liu and Enrique Mallada
\thanks{J. Liu and E. Mallada are with the Department of Electrical and Computer Engineering, Johns Hopkins University, MD 21218, U.S.A. 
        {\tt\small jliu376@jh.edu, mallada@jhu.edu}.}
\thanks{This work was supported by NSF through grant Global Center 2330450, and Johns Hopkins University Institute for Assured Autonomy.
}}
\begin{document}
\maketitle
\thispagestyle{empty}
\pagestyle{empty}

\begin{abstract}
Ensuring the safety of complex dynamical systems often relies on Hamilton-Jacobi (HJ) Reachability Analysis or Control Barrier Functions (CBFs). Both methods require computing a function that characterizes a safe set that can be made (control) invariant. However, the computational burden of solving high-dimensional partial differential equations (for HJ Reachability) or large-scale semidefinite programs (for CBFs) makes finding such functions challenging. In this paper, we introduce the notion of Recurrent Control Barrier Functions (RCBFs), a novel class of CBFs that leverages a recurrent property of the trajectories, i.e., coming back to a safe set, for safety verification. Under mild assumptions, we show that the RCBF condition holds for the signed‑distance function, turning function design into set identification. Notably, the resulting set need not be invariant to certify safety. 
We further propose a data-driven nonparametric method to compute safe sets that is massively parellizable, and trades off conservativeness against computational cost. 
\end{abstract}
\enrique{One aspect not addressed in the abstract is the fact that there are also data-driven methods for safety. Now, it is clear that these methods usually lack guarantees, but this is not disucssed.}
\jixian{Copy that}
\enrique{Given that we are at the final stages of the paper. Please highlight in color significant changes to the paper so that I can just go over them in my final pass. }
\section{Introduction}
Safety is a fundamental requirement in the control of dynamical systems, particularly in safety-critical applications such as robotics, autonomous vehicles, etc. Safety of the system is typically enforced via Hamilton-Jacobi (HJ) reachability analysis~\cite{mbt2005tac} or Control Barrier Functions (CBFs)~\cite{acenst2019ecc}, both of which build a function whose superlevel set is a control invariant safe set. Unfortunately, despite the popularity of these methods, their application relies on the computation of the value function or CBF, which presents significant challenges. HJ-reachability analysis requires solving partial differential equations, which suffers from the curse of dimensionality~\cite{bcht2017cdc}. The synthesis of valid CBFs often requires solving a Sum-of-Squares (SOS) optimization problem, which is also computationally demanding when applied to high-dimensional systems~\cite{dp2023acc, c2021cdc}. 

To reduce the computational burden, some data-driven methods have been proposed. DeepReach greatly improves computational efficiency for high-dimensional HJ reachability by using neural PDE solvers, but its learning-based approximation limits interpretability despite strong empirical performance~\cite{bansal2021deepreach}.~\cite{folkestad2020data} accelerates the synthesis of CBF by utilizing Koopman-based matrix multiplications, though at the expense of losing strict guarantees due to operator approximation.~\cite{lee2024data} constructs control-invariant safe sets from hard constraints via data-driven CBFs, offering efficiency but with safety guarantees limited by uneven or sparse sampling quality. 

In this paper, we build a framework to trade off the computational complexity of finding safe control sets with the level of conservativeness of the solution, which has theoretical safety guarantees. A key insight of the proposed approach is to substitute the invariance property that Reachability and CBF methods aim to guarantee with a more flexible notion called recurrence~\cite{sspm2023cdc,ssm2024allerton}. A set is ($\tau$-) recurrent if every trajectory that leaves the set comes back to it (within $\tau$ units of time) infinitely many times. Recurrence has emerged as a practical surrogate for invariance in analysis and verification—e.g., for regions of attraction~\cite{sbm2022cdc}, stability~\cite{sspm2023cdc}, and safety verification~\cite{ssm2024allerton}. Information-theoretically, enforcing (control) recurrence demands lower data rates than invariance~\cite{sm2024hscc} and can often be achieved from finite trajectories~\cite{sm2025nahs}.

Building on this literature, we extend the notion of Recurrent Barrier Functions proposed in~\cite{ssm2024allerton} to account for the addition of controls, thus introducing Recurrent Control Barrier Functions (RCBFs). RCBFs relax strict invariance by requiring a finite-time ($\tau$) return to a safe set—conditions met by signed distance functions of given sets—while preserving safety as long as the set excludes the $\tau$-backward reachable tube of the unsafe region.
We devise a nonparametric, sampling-based procedure to synthesize RCBFs and verify safety quickly and at scale. To do so, we introduce a robust RCBF condition that uses trajectory data to certify a neighborhood of the initial state; an adaptive sampling method and data-driven exploration remove the need for large optimization programs. The method is GPU-friendly and lets practitioners trade conservativeness for computation without compromising safety.

The remainder of this paper is organized as follows. Section~\ref{sec:prelims} reviews preliminaries on HJ reachability analysis and CBFs. Section~\ref{sec:problemformulation} introduces the definition of RCBFs, extending classical CBFs through recurrence-based safety conditions. Section~\ref{sec:approach} develops the robust conditions that allow for data-driven verification of the RCBF property on a neighborhood of trajectory samples. Section~\ref{sec:algorithms} integrates the robust conditions into a sampling-based method for nonparametric safety verification that actively chooses where to sample based on prior outcomes. Section~\ref{sec:simulations} provides numerical validations demonstrating the effectiveness of the proposed approach. Section~\ref{sec:conclusion} concludes the paper and outlines directions for future work.

\paragraph*{Notation} 
$\|\cdot\|$ is an arbitrary norm on $\mathbb{R}^n$.  
For $x \in \mathbb{R}^n$ and $r > 0$, the closed ball of radius $r$ centered at $x$ is defined as $\mathcal{B}_r(x) := \{ y \in \mathbb{R}^n \mid \|y - x\| \leq r \}.$ Given a set $S \subseteq \mathbb{R}^n$ and a point $x \in \mathbb{R}^n$, the signed distance from $x$ to $S$ is  
\begin{align*}
\mathrm{sd}(x,S) :=
\begin{cases}
\inf_{y \in \partial S} \|y - x\|, & \text{if } x \notin S, \\
- \inf_{y \in \partial S} \|y - x\|, & \text{if } x \in S.
\end{cases}
\end{align*}

\section{Preliminaries and Related Work}
\label{sec:prelims}
\subsection{Problem Statement}
\label{sec:problem statement}
Consider a continuous-time control system:
\begin{align}
    \dot{x} = F(x,u),
    \label{eq:Control_System}
\end{align}
where $x \in \mathcal{X} \subseteq \mathbb{R}^{n}$ is the system's state in the state space $\mathcal{X}$, $u \in U \subseteq \mathbb{R}^m$ is the control input. We define $\mathcal{U}^{(a, b]} := \{u: (a,b] \rightarrow U| u \text{ is measurable}\}$, as the set of control signals on the time interval $(a,b]$, and $\mathcal{U} := \mathcal{U}^{(0,+\infty)}$. Given $u_0 \in \mathcal{U}^{(0,a]}$ and $u_1 \in \mathcal{U}^{(0, b]}$, their concatenation $u_0u_1 \in \mathcal{U}^{(0, a+b]}$ is defined as
\begin{align*}
    (u_0u_1)(t) = \begin{cases}
u_0(t), & t \in (0,a], \\
u_1(t), & t \in (a,a+b].
\end{cases}
\end{align*}
Similarly, for $u\in\U^{(a,b]}$ and $(c,d]\subset(a,b]$ we will use $\restr{u}{(c,d]}$ to denote the restriction of $u$ to the interval $(c,d]$.

In a more general setting, consider a sequence of control inputs $u_n \in \mathcal{U}^{(0,\tau_n]}$, where $\tau_n > 0$ for every $n \in \mathbb{N}$. We define $ u_{[n]} := u_0 u_1 \cdots u_n,$ and $  u_{[\infty]} := \lim_{n \to \infty} u_{[n]}$. At times, we adopt a slight abuse of notation by writing $u$ both for instantaneous inputs in $U$ and for signals in $\mathcal{U}^{(a,b]}$; the intended interpretation will always be clear from context.

Given an initial state $x \in \mathbb{R}^n$ and a control signal $u \in \mathcal{U}^{(0,a]}$, we denote by $\phi(t,x,u)$ the trajectory solving~\eqref{eq:Control_System} for all $t \in (0,a]$. Throughout, we impose the following regularity assumptions on~\eqref{eq:Control_System}.

\begin{assumption}[Forward Completeness]\label{ass:forward}
The control system~\eqref{eq:Control_System} is \textbf{forward complete}, 
that is, for any initial condition $x \in \mathbb{R}^n$ and any input $u \in \mathcal{U}$, 
the solution $\phi(\cdot,x,u)$ exists and is unique on $[0,\infty)$.
\end{assumption}

\begin{assumption}[Uniform Local Lipschitz Continuity]\label{ass:lipschitz}
The vector field $F(x,u)$ in~\eqref{eq:Control_System} is locally Lipschitz in $x$, uniformly with respect to $u$. More precisely, for every compact set $S \subseteq \mathbb{R}^n$, there exists a constant $L \geq 0$ such that
\begin{align*}
\| F(y,u) - F(x,u) \| \leq L \|y - x\|, 
\quad \forall x,y \in S, \; \forall u \in U.
\end{align*}
\end{assumption}

\subsection{Safety Assessment}
Our goal is finding input signals $u(\cdot) \in \mathcal{U}$ such that the solution $\phi(t,x,u)$ to~\eqref{eq:Control_System} can avoid an unsafe region $\mathcal{X}_u \subset \mathcal{X}$ for all time $t\geq 0$. To that end, we aim to design an algorithm that can quickly find a strict subset of $\mathcal{X} \backslash \mathcal{X}_u$ that achieves this goal. We will therefore say that a state $x$ is considered to be safe if one can find a control $u \in \mathcal U$ such that the state trajectory $\phi(t, x, u)$ does not visit the unsafe region for all future time.

\begin{definition}[Safe State]
\label{def:safety}
A state $x \in \mathcal{X}$ is said to be \textbf{safe} w.r.t. the system~\eqref{eq:Control_System} if there exists a control $u \in \mathcal{U}$ such that the trajectory $\phi(t, x, u)$ never visits the unsafe region $\mathcal{X}_{u}$, i.e., $ \exists u \in \mathcal{U}$, s.t. $\forall t \geq 0,$ $\phi(t, x, u) \notin \mathcal{X}_{u}$.
\end{definition}

A common approach to ensure safety according to Definition~\ref{def:safety} is to find some set $\mathcal{C}$ that does not intersect with $\mathcal{X}_u$ and has the additional property that trajectories that start in $\mathcal{C}$ can be kept in $\mathcal{C}$. That is, $\mathcal{C}$ is control invariant.

\begin{definition}[Control Invariant Set]
A set $\mathcal{C} \subseteq \mathcal{X}$ is \textbf{control invariant} w.r.t.~\eqref{eq:Control_System} if for every $x \in \mathcal{C}$, there exists a control $u \in \mathcal{U}$ such that $\phi(t,x,u) \in \mathcal{C}$ for all $t \ge 0$.
\end{definition}

\subsection{Reachability Analysis}
\label{sec:reachability}
As mentioned before, a widely adopted method to verify safety is the Hamilton–Jacobi (HJ) reachability analysis. In this framework, one aims to compute the collection of all initial states from which, no matter what control one chooses, the trajectory will eventually end in the unsafe set $\mathcal{X}_u$. We provide a formal definition next.

\begin{definition}[Backward Reachable Tube]
For a set $S$, and constant $T>0$, the $T$-Backward Reachable Tube ($T$-BRT) is defined as:
\begin{align*}
\mathcal{R}_{T}(S) := \{ x \mid \forall u \in \mathcal{U}^{(0, T]}, \exists t \in (0,T], \mathrm{ s.t. } \,\phi(t,x,u) \in S \}.
\end{align*}
When $T=\infty$, we refer to it simply as the Backward Reachable Tube (BRT) and denote by $\mathcal{R}(S)$.
\end{definition}

To construct the BRT, the HJ reachability procedure casts the safety verification task as an optimal control problem. Here, the controller’s objective is to avoid the unsafe set $\mathcal{X}_{u}$. This is quantitatively expressed through a value function $V(x,t) := \min_{ s \in [-t, 0]} l(\phi(s, x, u))$, which measures the minimum cost or the distance to entry $\mathcal{X}_{u}$. In the absence of disturbances, the evolution of $V(x,t)$ is governed by a Hamilton-Jacobi-Isaacs Variational Inequality that takes the form of a Hamilton–Jacobi–Bellman equation~\cite{bcht2017cdc}:
\begin{align*}
\min \{ D_t V(x,t) + H(x,t,\nabla V(x,t)),
l(x) - V(x,t) \} = 0,
\end{align*}
where $l(x)$ is the terminal condition where $V(x, 0) = l(x)$, and $H(x, t, \nabla V(x, t)) := \max_{u \in U} D_{x} V(x, t) \cdot f(x, u)$.

Once $V(x,T)$ is computed, the $T$-BRT is given by the sublevel set
\begin{align*}
\mathcal{R}_T(\mathcal{X}_u)=\{ x \mid V(x,T) \le 0, x \in \mathcal{X}_{u} \},
\end{align*}
which implies that any state within this set will eventually lead to $\mathcal{X}_{u}$ under any control $u(\cdot)$ within less than $T$ units of time. HJ reachability gives rigorous safety guarantees when $x \in \mathcal{R}^{c}_{+\infty}(\mathcal{X}_u)$, which is the largest safe control invariant set, but is computational costly in high dimensions. Efficient solvers for the HJ PDE mitigate this~\cite{bcht2017cdc}, improving practicality. 
Our work tackles safety from a complementary angle.

\subsection{Control Barrier Functions}\label{sec:CBF}
CBFs offer another conservative alternative to HJ reachability. By bounding $\dot h$ with an extended class-$\mathcal K$ function, they render a chosen set $\mathcal C$ control invariant and thus ensure safety. To formally introduce CBFs we are required to introduce the notion of extended class $\K$ functions. 

\begin{definition}[Extended Class $\K$ Function]
A function $\kappa: \mathbb{R} \to \mathbb{R}$ is an extended class $\K$ function if it is continuous, strictly increasing, and satisfies $\kappa(0) = 0$.
\end{definition}

We are now ready to formally introduce CBFs. 

\begin{definition}[Control Barrier Function~\cite{acenst2019ecc}]\label{def:CBF}
A continuously differentiable function $h(x)$ is a CBF for the system~\eqref{eq:Control_System} if there exists an extended class $\K$ function $\kappa$ such that,
\begin{align}\label{eq:CBF}
    \max_{u \in U} L_{F} h(x) +\kappa(h(x)) \ge 0,
\end{align}
for all $x \in \mathcal{X}$, and where $L_{F} h(x) = \frac{\partial h}{\partial x}^\top F(x,u),$ are first-order Lie derivatives.
\end{definition}

\begin{theorem}[~\cite{acenst2019ecc}]
An immediate consequence of Definition~\ref{def:CBF} is that any Lipschitz-continuous controller $k(x)$ satisfying $$k(x) \in \{ u\in U \mid L_{F}h(x) + \kappa(h(x)) \ge 0 \},$$ renders the set $h_{\ge 0}:=\{x:h(x)\ge0\}$ invariant. Thus, $h_{\ge0}$ is, by definition, control invariant.
\end{theorem}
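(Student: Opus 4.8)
The statement is the classical forward‑invariance property of CBFs in the control setting, so the plan is to reproduce the standard Nagumo/monotonicity argument. First I would close the loop: substituting $u = k(x)$ into~\eqref{eq:Control_System} yields $\dot x = F(x,k(x))$, whose right‑hand side is locally Lipschitz in $x$ (since $k$ is Lipschitz and, by Assumption~\ref{ass:lipschitz}, $F$ is locally Lipschitz in $x$ uniformly in $u$), so for each $x_0 \in h_{\ge 0}$ there is a unique maximal solution, which is defined for all $t\ge 0$ because the corresponding open‑loop input $t\mapsto k(\phi(t))$ lies in $\U$ and Assumption~\ref{ass:forward} holds. Denote this closed‑loop trajectory by $\phi(t)$ and set $y(t) := h(\phi(t))$. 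Since $h\in C^1$ and $\phi$ is $C^1$ in $t$, $y$ is $C^1$, and evaluating the CBF inequality~\eqref{eq:CBF} at the admissible input $u = k(\phi(t))$ gives
\begin{align*}
\dot y(t) = \frac{\partial h}{\partial x}(\phi(t))^\top F\big(\phi(t),k(\phi(t))\big) = L_F h(\phi(t)) \ge -\kappa\big(h(\phi(t))\big) = -\kappa\big(y(t)\big).
\end{align*}

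Next I would prove $y(t)\ge 0$ for all $t\ge 0$ by contradiction, deliberately avoiding any comparison lemma (which is delicate here because $\kappa$ is only continuous, so the scalar ODE $\dot y = -\kappa(y)$ need not have unique solutions). Suppose $y(t_1) < 0$ for some $t_1 > 0$. Since $y(0)\ge 0$ and $y$ is continuous, the set $\{t\in[0,t_1] : y(t)\ge 0\}$ is nonempty and closed; let $t_0$ be its supremum, so $y(t_0) = 0$ and $y(t) < 0$ for all $t\in(t_0,t_1]$. On that interval $\kappa(y(t)) < 0$ because $\kappa$ is strictly increasing with $\kappa(0)=0$, hence $\dot y(t) \ge -\kappa(y(t)) > 0$, so $y$ is strictly increasing on $[t_0,t_1]$. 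But then $y(t_1) > y(t_0) = 0$, contradicting $y(t_1) < 0$. Therefore $h(\phi(t)) = y(t) \ge 0$ for all $t\ge 0$, i.e.\ $\phi(t)\in h_{\ge 0}$.

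Since this holds for every $x_0\in h_{\ge 0}$ and $k$ produces an admissible input in $\U$, the feedback $k$ renders $h_{\ge 0}$ forward invariant, so $h_{\ge 0}$ is control invariant by definition. The only genuine subtlety is regularity bookkeeping: one needs the closed‑loop vector field regular enough for existence/uniqueness and for $t\mapsto F(\phi(t),k(\phi(t)))$ to be continuous (ensured once $k$ is Lipschitz and $F$ is, in addition to Assumption~\ref{ass:lipschitz}, continuous in $u$), and one must accommodate a merely continuous $\kappa$ — both of which the direct monotonicity argument above handles cleanly. I expect no further obstacles, since this is essentially the standard CBF theorem of~\cite{acenst2019ecc}.
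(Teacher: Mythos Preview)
Your argument is correct and is essentially the standard Nagumo/comparison proof of the CBF invariance theorem from~\cite{acenst2019ecc}. Note, however, that the paper does not supply its own proof of this statement at all: the theorem is stated as a quoted result from~\cite{acenst2019ecc} and is used as background, so there is nothing in the paper to compare your approach against beyond the citation itself.
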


Thus, if such a CBF $h$ exists and $h_{\ge 0} \cap \mathcal{X}_u = \emptyset$, all states in $h_{\ge 0}$ can find a control signal $u \in \U$, whose signal at any moment is in the set of $k(x)$. That is to say for any intial state $x \in h_{\ge 0}$, there exists $u \in \mathcal{U}$ such that $\phi(t,x,u) \in h_{\ge 0}, \forall t >0,$ which means the states in $h_{\ge 0}$ are safe~\cite{acenst2019ecc}.

Sum-of-Squares (SOS) programming is widely used to synthesize/verify polynomial CBFs, but its cost grows rapidly with system dimension~\cite{c2024tac,pj2004whscc}, and polynomials may poorly capture complex safety sets. Neural network CBFs improve expressivity~\cite{xwhcalr2023tro,lld2023corl,ssmgrrf2024icra}, yet their validity is harder to certify due to limited interpretability~\cite{ssmgrrf2024icra}.

\section{Recurrent Control Barrier Function}
\label{sec:problemformulation}

The core idea behind ensuring safety using traditional CBFs is to construct a scalar function that makes $h_{\geq0}$ control invariant. Such sets can be as computationally expensive as a BRT, making CBF synthesis difficult. Leveraging recurrence, we show this explicit invariant set is unnecessary: valid RCBFs can be built from \textbf{control recurrent sets}, which relax invariance while keeping safety guarantees.

\subsection{Control Recurrent Sets}

In this section, we briefly cover the definition of recurrent sets in a control systems setting, which broadly allow trajectories to leave a set, provided they come back to it. The presentation follows~\cite{sbm2022cdc,ssm2024allerton,sm2024hscc}, particularly \cite{sm2024hscc}.

\begin{definition}[Control Recurrent Sets]
\label{def:recurrence}
A compact set $S \subseteq \mathbb{R}^n$ is called \textbf{control recurrent} w.r.t.~\eqref{eq:Control_System} if, for all $ x \in S $, $\exists$ $u\in \mathcal{U}$, such that for any $ t \geq 0 $,
\begin{align}
\exists t' > t \;\; \mathrm{with} \;\; \phi(t', x, u) \in S.
\end{align}
Likewise, a set $ S \subseteq \mathbb{R}^n $ is called \textbf{control} $\tau$-\textbf{recurrent} ($\tau > 0$) w.r.t.~\eqref{eq:Control_System} if, for all $ x \in S $,  $\exists$ $u\in \mathcal{U}$, such that for any $t \ge 0$,
\begin{align}\label{eq:tau-recurrent}
\exists\, t' > t\,, \;\; \mathrm{with} \;\; t' - t \in (0, \tau]\,, \;\; \mathrm{and} \;\; \phi(t', x, u) \in S\,.
\end{align}
We refer to such $\phi(t,x,u)$ as a ($\tau$-)recurrent trajectory.
\end{definition}

As Figure~\ref{fig:recurrence} shows, although a $\tau$-recurrent set is not necessarily invariant, it ensures that trajectories starting in $S$ will revisit it within at most $\tau$-time units infinite times. Notably, based on the Definition~\ref{def:recurrence}, an invariant set is always $\tau$-recurrent for any $\tau>0$. Additionally, a 0-recurrent set is equivalent to an invariant set. Thus, Definition~\ref{def:recurrence} generalizes invariance by allowing the trajectory $ \phi(t, x, u) $ to leave the set $ S $ before returning~\cite{ssm2024allerton}.
\label{sec:recurrence}
\begin{figure}[tbp]
    \centering
\includegraphics[width=0.9\linewidth]{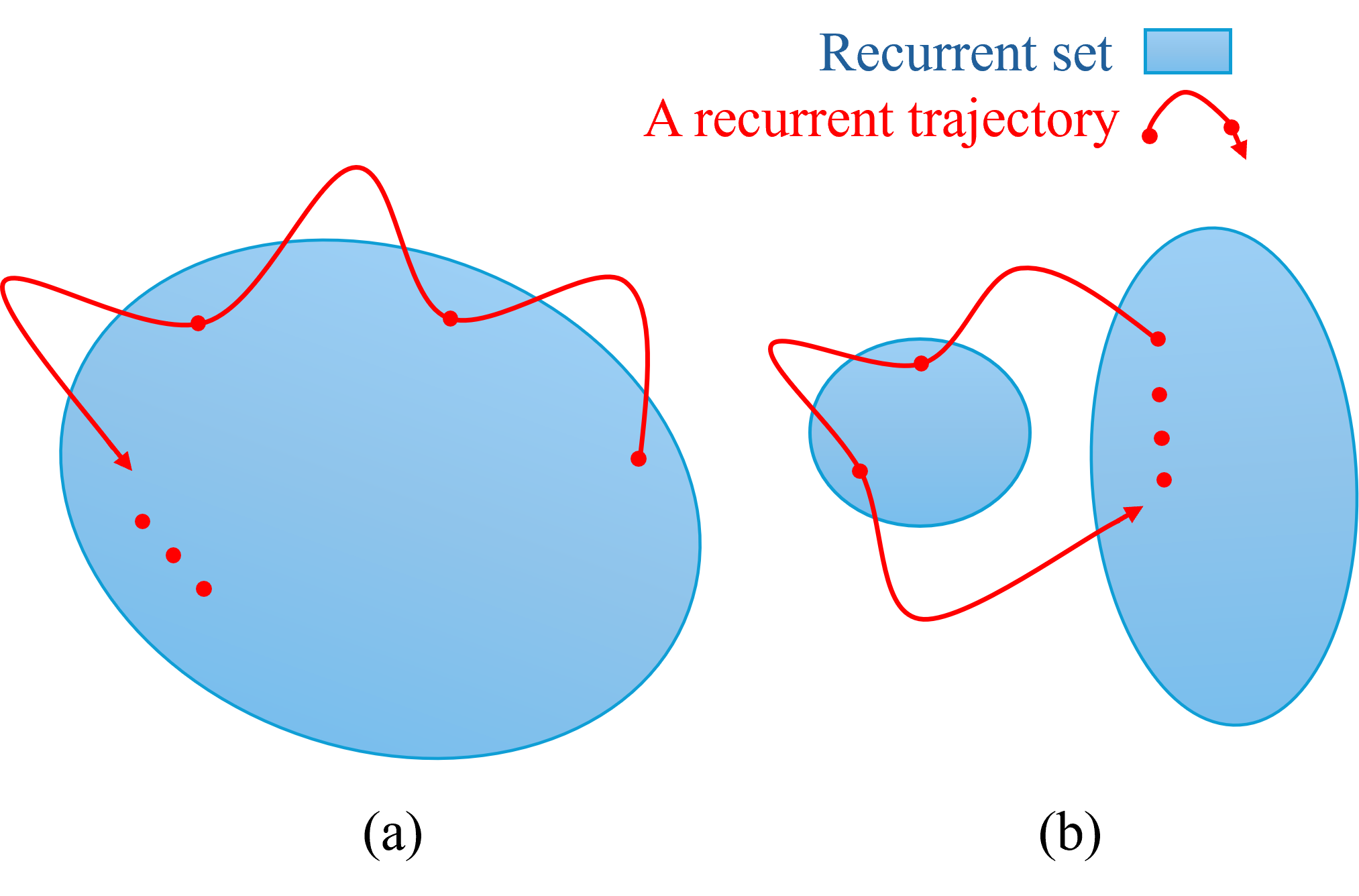}
    \caption{Illustration of Recurrent Sets and Recurrent Trajectories}
\label{fig:recurrence}
\vspace{-1em}
\end{figure}
Compared with the invariant sets, recurrent sets show a more flexible shape; it does not need the region to be connected, and it does not require the system~\eqref{eq:Control_System} to point inwards (or at least not outwards) on all the boundary $\partial S$. 



\subsection{Recurrent Control Barrier Function}
We now move towards introducing the proposed Recurrent Control Barrier Functions. In fact, similar to~\cite{ssm2024allerton}, simply requiring trajectories to return to the set within a finite time, infinitely many times can guarantee the safety for the dynamical system.




\begin{definition}[Recurrent Control Barrier Function]
\label{def:RCBF}
Consider the control system~\eqref{eq:Control_System}. A continuous function $h: \mathbb{R}^{n} \rightarrow \mathbb{R}$ is a \textbf{Recurrent Control Barrier Function (RCBF)} if for all $x\in D_0 := h_{\ge-c}$, with $c>0$, $\exists\, u \in \U^{(0,\tau]}$ s.t.
\begin{align}
\label{eq:RCBF}
\max\limits_{t\in (0,\tau]} e^{\gamma(h(\phi(t,x,u)))t} \,h(\phi(t,x,u)) \ge h(x),
\end{align}
where the function $\gamma: \mathbb{R} \to \mathbb{R}_{>0}$.
\end{definition}
In \eqref{eq:RCBF} we follow the standard convention that when the $\sup$ is not achieved within the set $(0,\tau]$ the $\max$ is $-\infty$. Thus, for the $\max$ to be lower bounded, it implies that it is achieved within $(0,\tau]$.
A particular choice of $\gamma$ that will be of use throughout this paper is
\begin{equation}\label{eq:gamma-piecewise}
\gamma_{\alpha,\beta}(s) = \begin{cases}
        \alpha, \; \text{ if }s\geq0\,,\quad\text {and}\quad
        \beta, \; \text{ if }s<0\,,
    \end{cases}
\end{equation}
where $\alpha$ and $\beta$ are positive parameters. This will be particularly useful in our converse results in Section \ref{Sec: Validity}.

The following theorem describes how to use RCBFs to assess safety.
\begin{theorem}[Safety Assessment via RCBFs]
\label{theo:safety_assessment}
Let $h$ be an RCBF as in Definition~\ref{def:RCBF}. Then:
\begin{enumerate}[label=(\roman*)]
    \item The superlevel set $h_{\ge 0}$ is control $\tau$-recurrent, i.e., for any $x \in h_{\ge 0}$ there exists $u \in \U$ such that the trajectory $\phi(t,x,u)$ always returns to $h_{\ge 0}$ within time $\tau$.
\end{enumerate}
Moreover, if $h_{\ge 0} \cap \mathcal{R}_\tau(\mathcal{X}_u)=\emptyset$, then:
\begin{enumerate}[label=(\roman*)]
 \setcounter{enumi}{1}
    \item For any $x \in h_{\ge 0}$, every $u \in \U$ that renders $\phi(t,x,u)$ $\tau$-recurrent also ensures that $\phi(t,x,u)\notin \X_u$ for all $t\geq0$. 
\end{enumerate}
In particular, under the condition $h_{\ge 0} \cap \mathcal{R}_\tau(\mathcal{X}_u)=\emptyset$, the set $h_{\ge 0}$ is safe.
\end{theorem}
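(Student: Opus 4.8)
The plan is to prove the two claims in sequence, with (i) feeding directly into (ii). For claim (i), fix $x \in h_{\ge 0}$. Since $h_{\ge 0} \subseteq D_0 = h_{\ge -c}$, the RCBF condition \eqref{eq:RCBF} applies at $x$: there is $u_0 \in \U^{(0,\tau]}$ and a time $t_1 \in (0,\tau]$ achieving the $\max$, with $e^{\gamma(\cdot)t_1} h(\phi(t_1,x,u_0)) \ge h(x) \ge 0$. Because $\gamma$ is strictly positive and the exponential factor is positive, this forces $h(\phi(t_1,x,u_0)) \ge 0$, i.e.\ $\phi(t_1,x,u_0) \in h_{\ge 0}$. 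Now iterate: apply the RCBF condition at the new point $\phi(t_1,x,u_0) \in h_{\ge 0}$ to get a control segment $u_1 \in \U^{(0,\tau]}$ and a return time $t_2 \in (0,\tau]$, and concatenate. Proceeding inductively produces a sequence of return times $t_k \in (0,\tau]$ and a concatenated control $u = u_0 u_1 u_2 \cdots \in \U$ (well-defined on $[0,\infty)$ by forward completeness, Assumption~\ref{ass:forward}, since $\sum_k t_k = \infty$ as each $t_k > 0$ is bounded below only in the sense that... — see the obstacle note below) such that $\phi\!\left(\sum_{j\le k} t_j,\, x,\, u\right) \in h_{\ge 0}$ for every $k$. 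Given any $t \ge 0$, pick the first partial sum exceeding $t$; the gap to the previous return is at most $\tau$, so there is a return time in $(t, t+\tau]$. Hence $\phi(\cdot,x,u)$ is $\tau$-recurrent with respect to $h_{\ge 0}$, which is the assertion of (i).

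For claim (ii), fix $x \in h_{\ge 0}$ and let $u \in \U$ be any control rendering $\phi(t,x,u)$ $\tau$-recurrent (such $u$ exists by (i)). Suppose for contradiction that $\phi(t^\ast, x, u) \in \X_u$ for some $t^\ast \ge 0$. The key observation is that the recurrence property, restricted to any window, lets us locate a point of the trajectory that lies in $h_{\ge 0}$ \emph{and} whose future within $\tau$ units hits $\X_u$. Concretely, by $\tau$-recurrence there is a return time $t_r$ with $\phi(t_r, x, u) \in h_{\ge 0}$ and $t^\ast - \tau \le t_r \le t^\ast$ (take the last return at or before $t^\ast$; it is within $\tau$ of $t^\ast$ by the definition of $\tau$-recurrence, and if $t^\ast < $ first return we can take $t_r$ slightly adjusted, or note $t^\ast$ itself is within $\tau$ of $0$ which is a trivial ``return''). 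Let $z := \phi(t_r, x, u) \in h_{\ge 0}$. Then the control signal $\restr{u}{(t_r,\, t_r + \tau]}$ (shifted to start at $0$) steers $z$ into $\X_u$ at time $t^\ast - t_r \in (0,\tau]$. This exhibits a control in $\U^{(0,\tau]}$ under which the trajectory from $z$ enters $\X_u$ within $\tau$ — but that is \emph{not} quite ``$z \in \R_\tau(\X_u)$,'' since the $T$-BRT requires \emph{every} control to drive the state into $S$ within $T$. So the argument must be rerun more carefully: we do not get $z \in \R_\tau(\X_u)$ from a single bad control. Instead, the correct route is to observe that the $u$ fixed above is, by (i), a valid $\tau$-recurrent control, and to argue that if $z \in h_{\ge 0}$ but $z \in \R_\tau(\X_u)$ then $h_{\ge 0} \cap \R_\tau(\X_u) \ne \emptyset$, contradicting the hypothesis. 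The missing piece is therefore showing $z \in \R_\tau(\X_u)$, i.e.\ that \emph{under every admissible control} the trajectory from some trajectory point returns to $\X_u$ within $\tau$; this does not follow from one trajectory alone. The resolution I would pursue: strengthen the contradiction hypothesis to ``the recurrent trajectory enters $\X_u$'' and chase forward along the recurrent returns — if $\phi(t^\ast,x,u)\in\X_u$, then by recurrence infinitely many later returns to $h_{\ge0}$ occur, and between consecutive returns the trajectory segment of length $\le\tau$ from a point of $h_{\ge0}$; one of these segments must straddle $t^\ast$, placing a point of $h_{\ge 0}$ whose $\le\tau$-future meets $\X_u$; since $\X_u\cap h_{\ge 0}=\emptyset$ follows from $h_{\ge0}\cap\R_\tau(\X_u)=\emptyset$ (as $\X_u\subseteq\R_\tau(\X_u)$ trivially), the straddling point is in $h_{\ge 0}$, yet its trajectory crosses into $\X_u$, and we then show this crossing point, pulled back, must sit in $\R_\tau(\X_u)\cap h_{\ge 0}$ — contradiction.

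The last statement (``$h_{\ge 0}$ is safe'') is then immediate from Definition~\ref{def:safety}: by (i) every $x\in h_{\ge0}$ admits a $\tau$-recurrent control $u$, and by (ii) this same $u$ guarantees $\phi(t,x,u)\notin\X_u$ for all $t\ge0$, which is exactly the definition of $x$ being safe.

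\textbf{Main obstacle.} The delicate step is (ii): bridging from ``one recurrent trajectory hits $\X_u$'' to ``some point of $h_{\ge 0}$ lies in the $\tau$-BRT $\R_\tau(\X_u)$,'' given that $\R_\tau$ quantifies over \emph{all} controls. The honest argument must use that after a bad event the recurrent control \emph{still} forces returns to $h_{\ge 0}$, so the trajectory is trapped oscillating near $\X_u$; one then extracts a point $z\in h_{\ge0}$ together with the tail control segment driving $z$ into $\X_u$ in time $\le\tau$, and argues — using that $\X_u$ is reached and $h_{\ge0}$ is revisited within every $\tau$-window, hence $z$ cannot escape $\X_u$'s $\tau$-backward tube under the recurrent control while simultaneously the no-intersection hypothesis forbids $z\in\R_\tau(\X_u)$. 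I expect the clean way to close this is by contradiction on the \emph{first} entry time into $\X_u$ together with the observation that $h_{\ge0}\cap\R_\tau(\X_u)=\emptyset$ implies in particular $h_{\ge 0}\cap\X_u=\emptyset$, so the trajectory must cross $\partial$ of a neighborhood; the remaining bookkeeping (choosing the right return time, shifting the control, invoking uniqueness from Assumption~\ref{ass:lipschitz}) is routine.
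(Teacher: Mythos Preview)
For (i) your approach matches the paper's: apply \eqref{eq:RCBF} at successive return points and concatenate the control segments. The loose end you flag --- why the return times do not accumulate, i.e.\ why $\sum_k t_k=\infty$ --- is handled in the paper by choosing each $\tau_n$ as the \emph{largest} maximizer in \eqref{eq:RCBF} and then invoking an external result (an analogue of \cite[Lemma~1]{sspm2023cdc}); adopt the same device rather than leaving it open.

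For (ii) the paper's route is more direct than your search for a nearby return time: given an unsafe hit at $t'$, it takes the \emph{last exit time} $t''\le t'$ with $h(\phi(t'',x,u))=0$ and $h<0$ on $(t'',t']$, then asserts $t'-t''>\tau$, so that $h<0$ on all of $(t'',t''+\tau]$, contradicting $\tau$-recurrence at time $t''$. This last-exit construction is what you should use in place of your back-and-forth.

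That said, the obstacle you flag is real, and it is precisely the step the paper asserts without justification. The inference ``$\phi(t'',x,u)\notin\R_\tau(\X_u)$, hence the trajectory under \emph{this} $u$ stays out of $\X_u$ for time $\tau$'' inverts the quantifier in the paper's own definition $\R_\tau(S)=\{x:\forall u,\ \exists t\in(0,\tau],\ \phi(t,x,u)\in S\}$: non-membership only guarantees that \emph{some} control avoids $\X_u$, not the fixed recurrent one. Concretely, take $\dot x=u$ with $u\in[-1,1]$, $\X_u=[-0.1,0.1]$, $h_{\ge0}=[-1,-0.2]\cup[0.2,1]$, $\tau=1$; then $h_{\ge0}\cap\R_1(\X_u)=\emptyset$, yet the bang-bang control oscillating $0.2\to-0.2\to0.2\to\cdots$ is $1$-recurrent and passes through $0\in\X_u$. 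So claim (ii) as written (``every $\tau$-recurrent $u$ is safe'') cannot be closed by more bookkeeping; the last-exit argument goes through verbatim only if $\R_\tau$ is read as the \emph{existential} reach tube $\{x:\exists u,\ \exists t\in(0,\tau],\ \phi(t,x,u)\in\X_u\}$. Your instinct about the quantifier is the correct diagnosis --- the fix is to clarify which reach set is intended, not to chase the recurrent trajectory further.
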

\begin{proof}
\textbf{(i) Control $\tau$-recurrence of $h_{\ge 0}$.}
Given $x\in h_{\ge 0}$, fix $x_0:=x$ and $t_0:=0$. By the RCBF condition~\eqref{eq:RCBF}, there exists
$u_0\in\U^{(0,\tau]}$ and a time
\begin{equation}\label{eq:t_0}
\tau_0 := \max\left\{ \arg\max_{t\in(0,\tau]}\ 
\ e^{\gamma(h(\phi(t,x_0,u)))\,t}\,h(\phi(t,x_0,u))\right\},
\end{equation}
such that $x_1:=\phi(\tau_0,x_0,u_0)\in h_{\ge 0}$ and $t_1:=\tau_0+t_0$. Proceed inductively: given $x_n\in h_{\ge0}$ and $t_n$, use \eqref{eq:RCBF} to select $u_n\in\U^{(0,\tau]}$ and
$\tau_n\in(0,\tau]$ as in \eqref{eq:t_0}, leading to $x_{n+1}:=\phi(\tau_{n},x_{n},u_{n})\in h_{\ge 0}$, and $t_{n+1}=\tau_n+t_n$.

The desired control $u\in \U$ is thus defined by concatenating the restrictions of ${u_n}$ to the intervals ${(0,\tau_n]}$, i.e.,
\[
u_{[n]}=\restr{u_0}{(0,\tau_0]}\restr{u_1}{(0,\tau_1]}\dots\restr{u_n}{(0,\tau_n]} \in \U^{(0,t_n]}
\]
and letting $u=\lim_{n\to\infty}u_{[n]}\in \U^{(0,t^*]}$, where $t^*=\lim_{n\rightarrow\infty} t_n$. An argument similar to~\cite[Lemma 1]{sspm2023cdc} shows that $t^*=\infty$. Moreover, it follows from the construction that for all $n\ge0$,
\begin{align}
    x_{n+1} =\phi(\tau_n,x_n,\restr{u_n}{(0,\tau_n]}) = \phi(t_n,x,u),
\end{align}
and therefore $\phi(t_n,x,u)\in h_{\ge0}$.
It follows then from the fact that for all $n\geq0$, $x_n\in h_{\ge0}$, $t_{n+1}-t_n\in(0,\tau]$ and $t_n\to \infty$, that the trajectory $\phi(t,x,u)$ is $\tau$-recurrent w.r.t. $h_{\ge0}$. Since $x\in h_{\ge0}$ was chosen arbitrarily, $(i)$ follows.

\medskip
\noindent\textbf{(ii) Safety under $h_{\ge 0}\cap \mathcal{R}_\tau(\mathcal{X}_u)=\emptyset$.}
Assume $h_{\ge 0}\cap \mathcal{R}_\tau(\mathcal{X}_u)=\emptyset$. Take any
$x\in h_{\ge 0}$ and any control $u\in\U$ that renders $\phi(t,x,u)$
$\tau$-recurrent w.r.t. $h_{\ge0}$. Suppose, towards a contradiction, that the trajectory is unsafe: there exists $t'>0$ with $\phi(t',x,u)\in\X_u$. Since $h(x)\ge 0$ and $h<0$ on $\X_u$, by continuity there exists a \emph{last exit time} $t''\in[0,t']$ with
$h(\phi(t'',x,u))=0$ and $h(\phi(t,x,u))<0$ for all $t\in(t'',t']$.
Because $h_{\ge 0}\cap \mathcal{R}_\tau(\mathcal{X}_u)=\emptyset$, the state
$\phi(t'',x,u)$ cannot reach $\X_u$ within time $\tau$, hence $t'-t''>\tau$ and
\[
h\big(\phi(t''+t,x,u)\big)<0\qquad\forall\,t\in(0,\tau],
\]
which contradicts with the fact that $u$ renders $\phi(t,x,u)$ $\tau$-recurrent w.r.t. $h_{\ge0}$.
\end{proof}

Besides ensuring safety, RCBFs also share similar properties like standard CBFs. In particular, it is possible to show that whenever $x\in D_0\backslash h_{\ge0}$, there is always some $u\in\U$ such that 
$$\lim\inf_{t\rightarrow\infty} \mathrm{d}(h_{\geq0},\phi(t,x,u))=0,$$
with $d(S,x):=\min_{y\in S}\|y-x\|$, thus ensuring that trajectories come back to $h_{\geq0}$ under simplified condition. We do not make these claims formal here, and refer the reader to \cite{ssm2024allerton} for similar arguments.

\subsection{Signed Distance Function: a Valid RCBF}
\label{Sec: Validity}
In this section, we present a striking result. The existence of a CBF $h$ satisfying some regularity conditions is sufficient for synthesizing a simple sign distance function that satisfies our RCBF condition. Firstly, we require $h$ to be sector contained.

\begin{definition}[Sector Containment] Let $h: D\subseteq \mathbb{R}^n \rightarrow \mathbb{R}$ be continuous. If $\exists a_1, a_2 > 0$ such that
\begin{align}\label{eq:sector_containment}
    (h(x) - a_{1}\mathrm{sd}(x, h_{\leq 0}))(h(x) - a_{2}\mathrm{sd}(x, h_{\leq 0})) \leq 0
\end{align}
for all $x \in D$, we say that $h$ is sector contained.
\label{def:sector_containment}
\end{definition}

The second condition refers to the particular choice of extended class $\K$ function. In particular, we will consider the sub-class:
\begin{equation}\label{eq:kappa-piecewise}
    \kappa_{\alpha,\beta}(s):= \gamma_{\alpha,\beta}(s)\, s.
\end{equation}

\begin{theorem}[Validity of Signed Distance Function as RCBF]
Let $h$ be a CBF satisfying~\eqref{eq:CBF} and \eqref{eq:sector_containment} over $D_0:=h_{\geq -c}$ with parameters $c>0$ and $a_2>a_1>0$, and extended class $\K$ function $\kappa_{\alpha,\beta}$ as in \eqref{eq:kappa-piecewise}, with parameters $\alpha>0$ and $\beta>0$.
Then, for any closed set $S$ satisfying ${h}_{\geq 0} \subseteq S \subseteq  h_{\geq -c}$, with $\partial S\cap h_{=0}=\emptyset$, the function $$\hat{h}(\cdot) = - \mathrm{sd}(\cdot, S)$$ is an RCBF over $\hat D_0:=\hat h_{\geq -\hat c}$ where $\hat c\geq 0$ is the largest constant satisfying $\hat h_{\geq -\hat c} \subseteq h_{\geq -c}$.

Precisely, for all $x\in \hat h_{\geq-\hat c}$, there exists $u\in\U^{(0,\tau]}$ s.t.
\begin{align}
    \max\limits_{ t\in (0,\hat\tau]} 
    e^{\hat \gamma( \hat h(\phi(t,x,u)))t} \hat h(\phi(t,x,u))
    \geq \hat{h}(x)
\label{eq:validity}
\end{align}
 where $\hat\gamma:=\gamma_{\hat\alpha,\hat\beta}$, with $\hat{\alpha}, \hat{\beta} > 0$ satisfying $\hat{\alpha} > \alpha, \hat{\beta} < \beta$, and $
    \hat{\tau} \geq \max\{\frac{\mathrm{log}(a_2/a_1)}{\hat{\alpha} - \alpha}, \frac{\mathrm{log}(a_2/a_1)}{\beta - \hat{\beta}}\} + \frac{\log (\overline{\delta}/\underline{\delta})}{\min \{\hat{\alpha}, \hat{\beta}\}}$
\label{theo:Validity}
where
\begin{align*}
\overline{\delta} &:= \sup_{x \in D_0} \left(\mathrm{sd}(x,S) - \mathrm{sd}(x,h_{\geq 0})\right),\\[6pt]
\underline{\delta} &:= \inf_{x \in D_0} \left(\mathrm{sd}(x,S) - \mathrm{sd}(x,h_{\geq 0})\right).
\end{align*}
\end{theorem}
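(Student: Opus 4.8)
The plan is to exploit the CBF inequality \eqref{eq:CBF} to produce, for each $x\in\hat D_0$, a control $u$ on $(0,\hat\tau]$ that makes the original CBF $h$ grow along a suitable exponential envelope, and then to translate that growth statement about $h$ into the desired growth statement about $\hat h=-\mathrm{sd}(\cdot,S)$ using the sector-containment sandwich \eqref{eq:sector_containment} and the bounds $\underline\delta,\overline\delta$ relating $\mathrm{sd}(\cdot,S)$ to $\mathrm{sd}(\cdot,h_{\ge0})$. First I would record the elementary consequence of \eqref{eq:CBF}: choosing at each instant $u$ achieving the max, the trajectory satisfies $\dot h(\phi(t)) \ge -\kappa_{\alpha,\beta}(h(\phi(t))) = -\gamma_{\alpha,\beta}(h(\phi(t)))\,h(\phi(t))$, so by a comparison/Gr\"onwall argument $e^{\gamma_{\alpha,\beta}(h(\phi(t)))\,t}h(\phi(t))$ is nondecreasing in $t$ as long as the trajectory stays in $D_0$ (one must check the sign bookkeeping: when $h>0$ the factor is $e^{\alpha t}$ and $\tfrac{d}{dt}(e^{\alpha t}h)\ge0$; when $h<0$ the factor is $e^{\beta t}$ and $\tfrac{d}{dt}(e^{\beta t}h)=e^{\beta t}(\dot h+\beta h)\ge e^{\beta t}(\beta h-\alpha h)\cdots$ — actually one should be careful here and use that $-\kappa_{\alpha,\beta}(h)+\beta h = 0$ when $h<0$, giving exactly $\dot h + \beta h \ge 0$). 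The upshot is the clean inequality $\displaystyle\max_{t\in(0,\hat\tau]} e^{\gamma_{\alpha,\beta}(h(\phi(t,x,u)))t}h(\phi(t,x,u)) \ge e^{\gamma_{\alpha,\beta}(h(x))\cdot 0}h(x)=h(x)$, provided the trajectory remains in $D_0$ up to time $\hat\tau$; I would need a short argument (using that $h_{\ge-c}$ can be taken/assumed forward-controlled-invariant under the CBF controller, or restricting $\hat\tau$) that the relevant portion of the trajectory stays in $D_0$.

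Next I would convert between $h$ and $\hat h$. On the region where $h<0$ we have $\mathrm{sd}(x,h_{\le0})\le 0$ and sector containment gives $a_2\,\mathrm{sd}(x,h_{\le0}) \le h(x)\le a_1\,\mathrm{sd}(x,h_{\le0})$ (orientation flipping on the other side), i.e. $h(x)$ and $\mathrm{sd}(x,h_{\le0})=-\,\mathrm{sd}(x,h_{\ge0})$ are comparable up to the constants $a_1,a_2$; combined with $|\mathrm{sd}(x,S)-\mathrm{sd}(x,h_{\ge0})|\in[\underline\delta,\overline\delta]$ this makes $h(x)$ and $\hat h(x)=-\mathrm{sd}(x,S)$ comparable up to the product constant $\tfrac{a_2}{a_1}\cdot\tfrac{\overline\delta}{\underline\delta}$. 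The strategy is then: start from $\hat h(x)$, pass to a lower bound in terms of $h(x)$ (losing a factor), apply the envelope-monotonicity of the previous paragraph to find a time $t^\star$ with $e^{\gamma_{\alpha,\beta}(h(\phi(t^\star)))t^\star}h(\phi(t^\star))\ge h(x)$, then pass back up to $\hat h(\phi(t^\star))$ (losing another factor). The accumulated multiplicative loss is absorbed by choosing $\hat\alpha>\alpha$, $\hat\beta<\beta$ and a time horizon $\hat\tau$ large enough that $e^{(\hat\alpha-\alpha)\hat\tau}$ (resp.\ $e^{(\beta-\hat\beta)\hat\tau}$) and $e^{\min\{\hat\alpha,\hat\beta\}\hat\tau}$ dominate the $a_2/a_1$ and $\overline\delta/\underline\delta$ factors respectively — which is precisely the stated lower bound on $\hat\tau$. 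I would do this by taking logs of the desired inequality $e^{\hat\gamma(\hat h(\phi(t^\star)))t^\star}\hat h(\phi(t^\star))\ge\hat h(x)$ and reducing it, via the comparability estimates, to the already-established inequality for $h$ plus the inequality $\hat\gamma(\cdot)\hat\tau \ge \gamma(\cdot)\hat\tau + \log(a_2/a_1) + \log(\overline\delta/\underline\delta)$-type slack, handling the $\hat h\ge0$ and $\hat h<0$ cases separately because the exponents $\hat\alpha$ vs.\ $\hat\beta$ switch.

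The main obstacle, I expect, is the bookkeeping around signs and the switching of the exponent $\gamma_{\alpha,\beta}$ as the trajectory crosses $h=0$ (or $\hat h=0$): the envelope $e^{\gamma(h(\cdot))t}h(\cdot)$ is only piecewise smooth, and monotonicity across the switching surface needs care (one typically checks that at a crossing from $h<0$ to $h>0$ the value is continuous, being $0$, so no jump; crossings the other way are what recurrence must rule out, but here we only need a lower bound at $t=0$ so an upward crossing is fine, and we never need to push past a downward crossing because we are allowed to pick $t^\star$ as the first return). A secondary subtlety is making sure the chosen control keeps $\phi(t,x,u)$ inside $D_0=h_{\ge-c}$ for $t\in(0,\hat\tau]$ so that \eqref{eq:CBF} and \eqref{eq:sector_containment} are applicable throughout; I would handle this by noting that while $h(\phi(t))\ge -c$ the envelope argument applies, and that the envelope's monotonicity prevents $h$ from dropping below its initial level scaled down by at most $e^{-\beta\hat\tau}$, so for $x$ ranging over $\hat D_0$ (which is chosen exactly as the largest sublevel tube contained in $h_{\ge-c}$) the trajectory cannot exit $h_{\ge-c}$ before time $\hat\tau$ — possibly this forces a mild additional smallness condition on $\hat\tau$ or on $\hat c$ that I would state explicitly. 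Finally, the constants $\overline\delta,\underline\delta$ are finite and $\underline\delta>0$ because $\partial S\cap h_{=0}=\emptyset$ with $h_{\ge0}\subseteq S$ forces $S$ to strictly contain $h_{\ge0}$ on the compact $D_0$; I would remark on this to justify that $\log(\overline\delta/\underline\delta)$ is well-defined.
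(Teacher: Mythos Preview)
The paper does not actually prove this theorem: its entire proof reads ``The proof follows closely similar results for the non-control case~[Theorem 11, ssm2024allerton] and it is omitted due to space constraints.'' So there is no in-paper argument to compare your proposal against. Your outline---use the CBF inequality with a pointwise maximizing control to get the exponential envelope $e^{\gamma_{\alpha,\beta}(h(\phi(t)))t}h(\phi(t))\ge h(x)$, then transfer this to $\hat h$ via the sector sandwich (costing a factor $a_2/a_1$) and the $\underline\delta,\overline\delta$ shift between $\mathrm{sd}(\cdot,h_{\ge0})$ and $\mathrm{sd}(\cdot,S)$ (costing a factor $\overline\delta/\underline\delta$), with the rate slack $\hat\alpha-\alpha$ and $\beta-\hat\beta$ over the horizon $\hat\tau$ absorbing those losses---is precisely the structure of the argument in the cited reference for the autonomous case, and your adaptation to the controlled setting is the natural one.

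Two small corrections to your write-up. First, your claim that ``$\underline\delta>0$'' has the sign backwards: since $h_{\ge0}\subseteq S$ with $\partial S\cap h_{=0}=\emptyset$, one has $\mathrm{sd}(x,S)<\mathrm{sd}(x,h_{\ge0})$ everywhere on $D_0$, so both $\underline\delta$ and $\overline\delta$ are strictly \emph{negative}; what is actually needed is that their ratio is positive and bounded away from $0$ and $\infty$, which still follows from compactness of $D_0$ and the boundary-separation hypothesis. Second, your worry that the trajectory might leave $D_0=h_{\ge-c}$ before time $\hat\tau$ is not a genuine obstacle and does not force any smallness on $\hat\tau$: under the CBF-maximizing control, $\dot h\ge-\kappa_{\alpha,\beta}(h)=-\beta h>0$ whenever $h<0$, so $h$ cannot decrease through the level $-c$, and when $h>0$ one has $h(\phi(t))\ge h(x)e^{-\alpha t}>0$; hence $\phi$ remains in $D_0$ for all $t\ge0$. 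The only loose end you should flag explicitly is the measurable-selection step (``choosing at each instant $u$ achieving the max''), which requires a standard selection theorem to produce $u\in\U^{(0,\hat\tau]}$.
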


\begin{proof}
The proof follows closely similar results for the non-control case~\cite[Theorem 11]{ssm2024allerton} and it is omitted due to space constraints.
\end{proof}

\section{Safety Enforcement Using Recurrence}
\label{sec:approach}
 
In this section, we aim to develop robust conditions that leverage trajectory samples to certify the satisfaction of the RCBF condition on a neighborhood of the trajectory. The proposed approach reduced the problem of checking the RCBF condition on uncountably many points, to checking it on finitely many states, possibly in parallel.

\subsection{Verification of a Cell}
To proceed, we first analyze how trajectories deviate from one another. This step lays the foundation for constructing a stronger verification criterion that ensures local safety in a neighborhood $\mathcal{B}_{r}(x)$ of each sampled point $x$, which we refer here as a cell.

\begin{lemma} 
Suppose that two trajectories $\phi(t, x, u)$, $ \phi(t,y, u)$ starts from $x$ and $y$ respectively and share the same control input $u$ all the times, where $\|x - y\| \leq r$. Then we have:
\begin{align}
    |\mathrm{sd}(\phi(t, y, u), S) \!-\! \mathrm{sd}(\phi(t, x, u), S) | \leq re^{Lt}, \forall t\geq0,
\end{align}
where $L$ is a uniform bound on the Lipschitz constant of~\eqref{eq:Control_System} on the $x$ variable.
\label{lemma:triangle_inequality}
\end{lemma}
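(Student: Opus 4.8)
The plan is to split the claimed bound into two standard estimates and chain them: first control how far the two trajectories drift apart in state space, then use that the signed-distance function is nonexpansive to transfer the bound to the signed distances.

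First I would bound $\|\phi(t,x,u)-\phi(t,y,u)\|$ by a Gronwall argument. Writing $e(t):=\phi(t,x,u)-\phi(t,y,u)$, forward completeness (Assumption~\ref{ass:forward}) guarantees both trajectories exist on $[0,t]$, and by continuity they remain in a compact set on which Assumption~\ref{ass:lipschitz} provides a Lipschitz constant $L$ uniform in $u$ — this is the $L$ in the statement. Since $\dot e(s)=F(\phi(s,x,u),u(s))-F(\phi(s,y,u),u(s))$ for a.e. $s$, the upper Dini derivative of $s\mapsto\|e(s)\|$ obeys $D^+\|e(s)\|\le\|\dot e(s)\|\le L\|e(s)\|$, and Gronwall's inequality yields $\|e(t)\|\le\|e(0)\|e^{Lt}=\|x-y\|e^{Lt}\le r e^{Lt}$ for all $t\ge0$.

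Next I would invoke the fact that $p\mapsto\mathrm{sd}(p,S)$ is $1$-Lipschitz in the norm $\|\cdot\|$ for any closed $S$. When $p$ and $q$ lie on the same side of $S$ (both in $S$, or both outside) this is immediate, since there $\mathrm{sd}(\cdot,S)$ agrees up to a sign with the ordinary distance to a closed set, which is nonexpansive. When $p\notin S$ and $q\in\mathrm{int}\,S$, the segment $[p,q]$ must meet $\partial S$ at some point $w$, so $\mathrm{sd}(p,S)-\mathrm{sd}(q,S)\le\|p-w\|+\|w-q\|=\|p-q\|$; the symmetric case is analogous. Combining this with the trajectory bound, taking $p=\phi(t,y,u)$ and $q=\phi(t,x,u)$, gives
\[
|\mathrm{sd}(\phi(t,y,u),S)-\mathrm{sd}(\phi(t,x,u),S)|\le\|\phi(t,y,u)-\phi(t,x,u)\|\le r e^{Lt}.
\]

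The only points needing a little care are (a) securing a single Lipschitz constant valid over all of $[0,t]$, which is precisely why forward completeness is used to confine both trajectories to a compact set, and (b) the cross-boundary case of the nonexpansiveness of $\mathrm{sd}(\cdot,S)$, handled by the segment-crossing observation above. I do not expect a genuine obstacle here — the lemma is essentially Gronwall's inequality packaged together with the $1$-Lipschitz property of signed distance.
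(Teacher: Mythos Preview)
Your proposal is correct and follows essentially the same route as the paper: a Gronwall estimate to obtain $\|\phi(t,x,u)-\phi(t,y,u)\|\le re^{Lt}$, followed by the $1$-Lipschitz property of $\mathrm{sd}(\cdot,S)$, with the cross-boundary case handled by the segment-crossing point on $\partial S$. The paper's write-up carries out the same three-case analysis for the signed-distance part and cites a textbook Gronwall comparison lemma rather than your Dini-derivative formulation, but the underlying arguments coincide.
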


\ifthenelse{\boolean{arxiv}}
{
\begin{proof}
The proof is omitted due to space constraints, and will be available in the archival version of this paper.
\end{proof}
}{
\begin{proof} See Appendix~\ref{lemma:tran_ineq}.
\end{proof}
}
%

We leverage Lemma~\ref{lemma:triangle_inequality} to verify different properties of a given cell $\mathcal{B}_r(\cdot)$.
In particular, Theorem~\ref{theo:reachability_Approximation} below gives us a condition that verifies whether a cell $\mathcal{B}_{r}(x)$ is completely inside $\mathcal{R}_{\tau}(\mathcal{X}_{u})$, completely outside $\mathcal{R}_{\tau}(\mathcal{X}_{u})$, or partially inside $\mathcal{R}_{\tau}(\mathcal{X}_{u})$. This will be critical to over approximate $\R_\tau(\X_u)$.
\begin{theorem}
Consider a state $x \in \mathcal{X}$, and let $\mathcal{B}_{r}(x) :=\{y | \|y-x\| \leq r\}$ be a neighborhood of $x$. Then, we have:
\begin{itemize}
    \item[(i)] Given $x\in \X$, if there exists $u\in \U^{(0,\tau]}$ s.t.
          \begin{align}
            \forall t \in [0,\tau],\;\mathrm{sd}(\phi(t, x, u), \mathcal{X}_{u}) > re^{Lt},  
            \label{eq:robust_safe}
          \end{align}
          for some $r>0$, then $\mathcal{B}_{r}(x) \cap \mathcal{R}_{\tau}(\mathcal{X}_{u}) = \emptyset $,  
    \item[(ii)] Conversely, given $x\in\X$, if for all $u \in \mathcal{U}^{(0,\tau]}$,
          \begin{align}
               \exists t\in (0,\tau], \text{ s.t. } \mathrm{sd}(\phi(t, x, u), \mathcal{X}_{u}) < -r e^{Lt}, 
             \label{eq:robust_unsafe}
          \end{align}
          for some $r > 0$, then $\mathcal{B}_{r}(x) \subseteq \mathcal{R}_{\tau}(\mathcal{X}_{u})$
\end{itemize}
\label{theo:reachability_Approximation}
\end{theorem}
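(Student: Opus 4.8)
The plan is to exploit Lemma~\ref{lemma:triangle_inequality}, which controls how much the signed distance to a set can drift between two trajectories driven by the \emph{same} control. The key observation is that $\mathcal{R}_\tau(\mathcal{X}_u)$ is characterized by a $\forall u\,\exists t$ (for membership) or, contrapositively, $\exists u\,\forall t$ (for non-membership) quantifier structure, so the two items of the theorem are naturally proved by exhibiting a single good control (item (i)) or by showing every control fails (item (ii)).

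For item (i): suppose $u\in\U^{(0,\tau]}$ satisfies \eqref{eq:robust_safe}. I want to show no point $y\in\mathcal{B}_r(x)$ belongs to $\mathcal{R}_\tau(\mathcal{X}_u)$, so fix such a $y$; it suffices to produce \emph{one} control under which $\phi(t,y,\cdot)$ avoids $\mathcal{X}_u$ on $(0,\tau]$, and the candidate is the very same $u$. By Lemma~\ref{lemma:triangle_inequality} applied with $S=\mathcal{X}_u$, for every $t\in[0,\tau]$ we have $|\mathrm{sd}(\phi(t,y,u),\mathcal{X}_u)-\mathrm{sd}(\phi(t,x,u),\mathcal{X}_u)|\le \|x-y\|e^{Lt}\le re^{Lt}$, hence $\mathrm{sd}(\phi(t,y,u),\mathcal{X}_u) \ge \mathrm{sd}(\phi(t,x,u),\mathcal{X}_u) - re^{Lt} > 0$ by \eqref{eq:robust_safe}. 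A strictly positive signed distance means $\phi(t,y,u)\notin \mathcal{X}_u$ (indeed it is outside the closure), so $y\notin\mathcal{R}_\tau(\mathcal{X}_u)$. Since $y$ was arbitrary in the cell, $\mathcal{B}_r(x)\cap\mathcal{R}_\tau(\mathcal{X}_u)=\emptyset$.

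For item (ii): assume \eqref{eq:robust_unsafe} and fix $y\in\mathcal{B}_r(x)$; I must show $y\in\mathcal{R}_\tau(\mathcal{X}_u)$, i.e.\ for \emph{every} $u\in\U^{(0,\tau]}$ the trajectory from $y$ enters $\mathcal{X}_u$ within time $\tau$. Given any such $u$, the hypothesis supplies a time $t\in(0,\tau]$ with $\mathrm{sd}(\phi(t,x,u),\mathcal{X}_u)<-re^{Lt}$. Again Lemma~\ref{lemma:triangle_inequality} gives $\mathrm{sd}(\phi(t,y,u),\mathcal{X}_u)\le \mathrm{sd}(\phi(t,x,u),\mathcal{X}_u)+re^{Lt}<0$, so $\phi(t,y,u)\in\mathcal{X}_u$ (a negative signed distance puts the point in the interior of the set, in particular in the set). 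Thus the defining condition of $\mathcal{R}_\tau(\mathcal{X}_u)$ holds for $y$, and since $y$ was arbitrary in the cell, $\mathcal{B}_r(x)\subseteq\mathcal{R}_\tau(\mathcal{X}_u)$.

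The main obstacle is mostly bookkeeping about the sign conventions: one must be careful that $\mathrm{sd}(z,\mathcal{X}_u)>0$ rigorously implies $z\notin\mathcal{X}_u$ and that $\mathrm{sd}(z,\mathcal{X}_u)<0$ implies $z\in\mathcal{X}_u$, which requires $\mathcal{X}_u$ to be (at least) closed, or equivalently that we interpret "visits $\mathcal{X}_u$" consistently with the signed-distance definition given in the Notation paragraph; a brief remark handles the boundary case $\mathrm{sd}=0$. A secondary subtlety is the index set of $t$: \eqref{eq:robust_safe} is stated on the closed interval $[0,\tau]$ while the BRT uses $t\in(0,\tau]$, so one should note the restriction to $(0,\tau]$ is harmless (it only strengthens the conclusion). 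No compactness or selection argument is needed here because the control that witnesses safety (resp.\ the time that witnesses unsafety) is simply inherited from the hypothesis on the center $x$.
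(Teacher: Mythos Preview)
Your proposal is correct and follows essentially the same approach as the paper: both parts are proved by a direct application of Lemma~\ref{lemma:triangle_inequality} with $S=\mathcal{X}_u$, using the same control $u$ (for part~(i)) or the same witnessing time (for part~(ii)) inherited from the hypothesis on the center $x$. Your discussion of the quantifier structure and the sign-convention subtleties is a bit more explicit than the paper's, but the argument is identical in substance.
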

\ifthenelse{\boolean{arxiv}}
{
\begin{proof}
The proof is omitted due to space constraints, and will be available in the archival version of this paper.
\end{proof}
}{
\begin{proof}
\noindent
    \textit{(i)} If the initial states satisfy the condition~\eqref{eq:robust_safe}, then by Lemma~\ref{lemma:triangle_inequality}, we have:
    \begin{align*}
      \mathrm{sd}(\phi(t, y, u), \mathcal{X}_{u})
      \geq \mathrm{sd}(\phi(t, x, u), \mathcal{X}_{u})-re^{-Lt}>0,
    \end{align*}
    for all $t \in [0,\tau]$ and all $y \in \mathcal{B}_{r}(x)$.  Hence,
    \begin{align*}
      \mathcal{B}_{r}(x) \cap \mathcal{R}_{\tau}(\mathcal{X}_{u}) = \emptyset.
    \end{align*}

    \noindent
    \textit{(ii)} If instead the initial states satisfy the condition~\eqref{eq:robust_unsafe}, let $t^{*} < \tau$ be the time at which
    \begin{align*}
      \mathrm{sd}(\phi(t^{*}, x, u), \mathcal{X}_{u}) <-re^{Lt^{*}}.
    \end{align*}
    Again, by Lemma~\ref{lemma:triangle_inequality}, we have:
    \begin{align*}
      \mathrm{sd}(\phi(t^{*}, y, u), \mathcal{X}_{u})
      \le \mathrm{sd}(\phi(t^{*}, x, u), \mathcal{X}_{u})+re^{Lt^{*}}
      <0,
    \end{align*}
    for all $y \in \mathcal{B}_{r}(x)$.  Consequently,
    \begin{align*}
      \mathcal{B}_{r}(x) \subseteq R_{t^{*}}(\mathcal{X}_{u}).
    \end{align*}
\end{proof}}

The following theorem verifies whether the states of a cell all satisfy the RCBF condition~\eqref{eq:RCBF} or all such states are guaranteed not to satisfy such a condition.

\jixian{-sd and sd???}

\begin{theorem}
Given a closed set $S$, a candidate RCBF $h(\cdot) := - \mathrm{sd}(\cdot, S)$, and function $\gamma:=\gamma_{\alpha,\beta}$, with $\alpha,\beta>0$.
\begin{itemize}
    \item[(i)] Let 
        \begin{align}
            \hat{h}_{r}^{-}(x,u,t) := h(\phi(t, x, u)) - re^{Lt}           
        \end{align}
    and assume that $\exists u\in\U^{(0,\tau]}$ s.t. the following holds
    \begin{align}
        \max\limits_{ t \in (0,\tau]} e^{\gamma(\hat h_r^-(x,u,t)) t}\, \hat{h}_{r}^{-}(x, u, t)\geq h(x) + r,
        \label{eq:robu_recur}
    \end{align}
    for some $r>0$. Then for all $y \in \mathcal{B}_{r}(x)$, the RCBF condition is satisfied, i.e.,
    \begin{align}
    \hspace{-1em}
        \max\limits_{t \in (0,\tau]} 
        e^{\gamma(h(y, u, t)) t} h(\phi(t, y, u))
        \geq h(y), 
        \label{eq:robu_recur_con}
    \end{align}
    \item[(ii)] Let 
    \begin{align}
         \hat{h}_{r}^{+}(x,u,t) := h(\phi(t, x, u)) + re^{Lt}
    \end{align}
    and assume that $\forall u\in\U^{(0,\tau]}$ s.t. the following holds
    \begin{align}
    \hspace{-1em}
        \max\limits_{ t \in (0,\tau]} e^{\gamma(\hat h_r^+(x,u,t)) t}\, \hat{h}_{r}^{+}(x, u, t)< h(x) - r
        \label{eq:robu_unrecur}
    \end{align}
    for some $r > 0$. Then, for all $y \in \mathcal{B}_{r}(x)$, the RCBF condition is not satisfied, i.e.,  
    \begin{align}
    \hspace{-1em}
        \max_{t \in (0, \tau]} 
        e^{\gamma(h(\phi(t,y,u))) t} h(\phi(t, y, u))
        < h(y).
        \label{eq:robu_unrecur_con}
    \end{align}
\end{itemize}
\label{theo:RCBF_Approximation}
\end{theorem}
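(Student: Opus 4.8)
The plan is to reduce both claims to the trajectory–deviation estimate of Lemma~\ref{lemma:triangle_inequality}, which gives $|h(\phi(t,y,u)) - h(\phi(t,x,u))| \le re^{Lt}$ for any $y\in\mathcal{B}_r(x)$ sharing the same control $u$ (recall $h = -\mathrm{sd}(\cdot,S)$, so the bound on signed distances transfers verbatim to $h$). The two items are mirror images, so I would carry out (i) in full and then indicate the sign flips for (ii).

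For (i): fix the control $u\in\U^{(0,\tau]}$ witnessing \eqref{eq:robu_recur} and fix an arbitrary $y\in\mathcal{B}_r(x)$. For every $t\in(0,\tau]$, Lemma~\ref{lemma:triangle_inequality} gives $h(\phi(t,y,u)) \ge h(\phi(t,x,u)) - re^{Lt} = \hat h_r^-(x,u,t)$, and also $h(y)\le h(x)+r$. The remaining step is to push these two pointwise bounds through the factor $e^{\gamma(\cdot)t}$. Here I would use that $\gamma=\gamma_{\alpha,\beta}$ takes only the two positive values $\alpha,\beta$, hence $e^{\gamma(\cdot)t}\ge 1$ for $t>0$ regardless of the sign of its argument; combined with the fact that the map $s\mapsto e^{\gamma(s)t}\,s$ is nondecreasing in $s$ (it equals $e^{\beta t}s$ for $s<0$ and $e^{\alpha t}s$ for $s\ge0$, each nondecreasing, and at the junction $s=0$ both pieces vanish), we get
\[
e^{\gamma(h(\phi(t,y,u)))\,t}\,h(\phi(t,y,u)) \;\ge\; e^{\gamma(\hat h_r^-(x,u,t))\,t}\,\hat h_r^-(x,u,t)
\]
for each $t\in(0,\tau]$. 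Taking $\max$ over $t$ on both sides and chaining with \eqref{eq:robu_recur} and $h(x)+r\ge h(y)$ yields \eqref{eq:robu_recur_con}. Since $y\in\mathcal{B}_r(x)$ was arbitrary, (i) follows.

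For (ii): now the hypothesis \eqref{eq:robu_unrecur} is universally quantified over $u$, so fix an arbitrary $u\in\U^{(0,\tau]}$ and an arbitrary $y\in\mathcal{B}_r(x)$. Lemma~\ref{lemma:triangle_inequality} now gives $h(\phi(t,y,u)) \le h(\phi(t,x,u)) + re^{Lt} = \hat h_r^+(x,u,t)$ and $h(y)\ge h(x)-r$. Using the same monotonicity of $s\mapsto e^{\gamma(s)t}s$ in the opposite direction, $e^{\gamma(h(\phi(t,y,u)))t}h(\phi(t,y,u)) \le e^{\gamma(\hat h_r^+(x,u,t))t}\hat h_r^+(x,u,t)$ for every $t$; taking $\max$ over $t\in(0,\tau]$ and using \eqref{eq:robu_unrecur} together with $h(x)-r\le h(y)$ gives \eqref{eq:robu_unrecur_con} for this $u$, and since $u$ was arbitrary the RCBF condition fails for $y$.

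The main obstacle I anticipate is the step that transports a lower (resp. upper) bound on $h(\phi(t,\cdot,u))$ through the nonlinear weight $e^{\gamma(h(\phi(t,\cdot,u)))t}$: the argument of $\gamma$ is itself the quantity being bounded, so one cannot treat the exponential factor as fixed. The clean way around this is precisely the observation that $s\mapsto e^{\gamma_{\alpha,\beta}(s)t}\,s$ is a (globally) nondecreasing function of $s$ for each fixed $t\ge 0$ — a small lemma worth stating explicitly — after which everything reduces to monotonicity plus $\max$-preservation. A minor point to be careful about is the $\max$ versus $\sup$ convention noted after Definition~\ref{def:RCBF}: in (i) the hypothesis guarantees the maximum on the left of \eqref{eq:robu_recur} is attained and finite, which forces the maximum on the left of \eqref{eq:robu_recur_con} to be attained as well (it is $\ge h(y) > -\infty$), so the convention is respected; in (ii) no such issue arises since we only need an upper bound.
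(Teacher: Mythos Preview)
Your proposal is correct and follows essentially the same route as the paper: both arguments reduce to Lemma~\ref{lemma:triangle_inequality} to bound $h(\phi(t,y,u))$ by $\hat h_r^{\pm}(x,u,t)$ and $h(y)$ by $h(x)\pm r$, then chain through the robust hypothesis. The one place where you are more careful than the paper is the passage through the weight $e^{\gamma(\cdot)t}$: the paper simply writes ``the second inequality is derived from Lemma~\ref{lemma:triangle_inequality}'' when going from $e^{\gamma(h(\phi(t,y,u)))t}h(\phi(t,y,u))$ to $e^{\gamma(\hat h_r^{-})t}\hat h_r^{-}$, whereas you correctly isolate the needed fact that $s\mapsto e^{\gamma_{\alpha,\beta}(s)t}s$ is nondecreasing for each fixed $t\ge 0$. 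That monotonicity lemma is exactly what makes the step go through and is worth stating explicitly as you do. A cosmetic difference is that the paper picks a maximizing $t^\ast$ (and, somewhat redundantly, a $u^\ast$) and works at that single time, while you establish the pointwise inequality for every $t\in(0,\tau]$ and then take $\max$; the two are equivalent, and your version avoids having to argue that the maximum is attained before using it.
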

\ifthenelse{\boolean{arxiv}}
{
\begin{proof}
The proof is omitted due to space constraints, and will be available in the archival version of this paper.
\end{proof}
}{
\begin{proof}
\begin{itemize}
    \item [(i)]Let $t^{*}$ and $u^{*}$ be the time that maximizes the left-hand side of~\eqref{eq:robu_recur}, i.e. 
\begin{align*}
  t^{*}
  =
  \arg\max_{t\in(0,\tau]} \max\limits_{u \in \mathcal{U}^{(0,\tau]}} e^{\gamma(\hat h_r^-(x,u,t)) t}\, \hat{h}_{r}^{-}(x, u, t),\\
    u^{*}
  =
  \arg\max_{u\in \mathcal{U}^{(0,\tau]}} \max\limits_{t \in (0,\tau]} e^{\gamma(\hat h_r^-(x,u,t)) t}\, \hat{h}_{r}^{-}(x, u, t).
\end{align*}
At this maximizing time $t^{*} \in (0,\tau]$, it follows that
\begin{align*}
         & \max\limits_{u \in \mathcal{U}, t \in (0,\tau]} 
        e^{\gamma(h(y, u, t)) t} h(\phi(t, y, u))\\
        \geq&e^{\gamma(h(y,u^{*},t^{*})) t^{*}}\, h(y, u^{*}, t^{*})
  \\\geq &
  e^{\gamma(\hat h_r^-(x,u,t^{*})) t^{*}}\, \hat h_{r}^{-}(x, u, t^{*})
  \\
  \geq &
  h(x)+r
  \\
  \geq &
  h(y),
\end{align*}
where the first inequality follows from the definition of maximum, the second and fourth inequalities are derived from Lemma~\ref{lemma:triangle_inequality} and the third inequality is derived from the conditions of~\eqref{eq:robu_recur_con}.
    \item [(ii)] Let $t^{*}$ and $u^{*}$ be the time and control signal that maximize the left-hand side of~\eqref{eq:robu_unrecur_con}, i.e. 
\begin{align*}
  t^{*} = \arg\max_{t\in(0,\tau]}\max_{u \in \mathcal{U}^{(0,\tau]}} e^{\gamma(h(y,u,t)) t}\,h(y, u, t),\\
  u^{*} = \arg\max_{u\in\mathcal{U}^{(0,\tau]}}\max_{t \in (0,\tau]}e^{\gamma(h(y,u,t)) t}\,h(y, u, t).
\end{align*}
Again, at this maximizing time $t^{*} \in (0,\tau]$, we get

\begin{align*}
  & \max_{u \in \mathcal{U}^{(0,\tau]}, t \in (0, \tau]} e^{\gamma(h(y, u, t)) t} h(y, u, t)\\
  \le &
e^{\gamma(\hat h_r^+(x, u^*, t^*)) t^*} \hat h_r^+(x, u^*, t^*)
  \\
  \leq & \max\limits_{u \in \mathcal{U}^{(0,\tau]}, t \in (0,\tau]} e^{\gamma(\hat h_r^+(x,u,t)) t}\, \hat{h}_{r}^{+}(x, u, t)\\
  < & h(x)-r\\
  < & h(y),
\end{align*}
where the second inequality is derived from the definition of the maximum, the first and fourth inequalities are derived from Lemma~\ref{lemma:triangle_inequality}, and the third inequality is derived from the conditions of~\eqref{eq:robu_unrecur}.
\end{itemize}
\end{proof}}

\section{Numerical Methods}
\label{sec:algorithms}

Building on Theorem~\ref{theo:reachability_Approximation} and Theorem~\ref{theo:RCBF_Approximation}, we propose a safety verification algorithm aimed at finding a set $S\subset \X$ such that $h=-\mathrm{sd}(x,S)$ satisfies all the necessary properties for safety assessment described in Theorem~\ref{theo:safety_assessment}.
The proposed method adaptively partitions $\mathcal{X}$ into cells
$
\mathcal{G}:=\{g_i:=\mathcal{B}_{r_i}(x_i)\}_{i=1}^{|\mathcal{G}|},$  such that $g_i\cap g_j=\emptyset, i \neq j$ and $\cup\mathcal{G}:=\cup_{i=1}^{|\mathcal{G}|} g_i=\mathcal{X},$ 
when running our algorithms, two disjoint lists, $\mathcal{G}_s$ (tentative safe cells) and $\mathcal{G}_u$ (verified unsafe cells) are maintained and refined, and progressively, cells from $\mathcal{G}_s$ are assigned to $\mathcal{G}_u$ (while keeping $\X = (\cup \mathcal{G}_u) \cup (\cup\mathcal{G}_s)$) until one is able to guarantee that the safe set $S=\cup \mathcal{G}_s$ and RCBF $h=-\mathrm{sd}(x,S)$ satisfy the robust conditions of Theorem~\ref{theo:safety_assessment}.

This verification process is carried out in three stages, as illustrated in Algorithm~\ref{alg:verifyregion}, where lines $2, 3,$ and $4$ represent stages $1, 2,$ and $3$, respectively.

\begin{algorithm}[H]
\caption{VerifyRegion($\mathcal{X}, \tau$, $\alpha$, $\beta$)}
\label{alg:verifyregion}
\begin{algorithmic}[1]
\State \textbf{Input:} State Space $\mathcal{X}$, Parameters $\tau$, $\alpha$, and $\beta > 0$. 
\State $\mathcal{G}_{s}, \mathcal{G}_{u} = \mathrm{VerifyCells}(\mathcal{X}, \mathcal{X}_{u}, 0, ~\eqref{eq:robust_safe}, ~\eqref{eq:robust_unsafe})$
\State $\mathcal{G}_{s}, \mathcal{G}_{u} = \mathrm{VerifyCells}(\mathcal{G}_{s}, \mathcal{G}_{u}, \tau, ~\eqref{eq:robust_safe}, ~\eqref{eq:robust_unsafe})$
\State $\mathcal{G}_{s}, \mathcal{G}_{u} = \mathrm{VerifyCells}(\mathcal{G}_{s}, \mathcal{G}_{u}, \tau, ~\eqref{eq:robu_recur_con}, ~\eqref{eq:robu_unrecur_con})$
\Comment{$\alpha$ and $\beta$ are used in the conditions \eqref{eq:robu_recur_con} and \eqref{eq:robu_unrecur_con}.}
\State \Return $\mathcal{G}_{s}$, $\mathcal{G}_{u}$
\end{algorithmic}
\end{algorithm}
Each stage aims to sequentially get a better approximation of a region $S \subseteq \mathcal{X}$ for $h=-\mathrm{sd}(x,S)$ to be a valid RCBF.
Stage 1 first finds a sufficiently fine outer approximation of $\X_u$. Stage 2 finds an outer approximation of $\R_\tau(\cup \mathcal{G}_u)$, with $\mathcal{G}_u$ being the output of Stage 1. Finally, Stage 3 further uses $S = \cup \mathcal{G}_s$ in order to find such a $h$ satisfy the RCBF condition. 

All stages are implemented by calling a VerifyCells routine, Algorithm~\ref{alg:verifycells}, with the current estimates of $\mathcal{G}_s$ and $\mathcal{G}_u$ and the assignment conditions $\mathcal{C}_s$ and $\mathcal{C}_u$, corresponding to conditions~\eqref{eq:robust_safe} and~\eqref{eq:robu_recur_con}, and~\eqref{eq:robust_unsafe} and~\ref{eq:robu_unrecur_con}, respectively. Note that we initially start with one cell (the full set $\X$), and each pass progressively finds finer and more accurate approximations for $\mathcal{G}_s$ and $\mathcal{G}_u$.

\begin{algorithm}[htbp]
\caption{VerifyCells($\mathcal{G}, \mathcal{G}_{u}, \tau, \mathcal{C}_{\mathrm{s}},\mathcal{C}_{\mathrm{u}}$)}
\label{alg:verifycells}
\begin{algorithmic}[1]
\State \textbf{Input:} Grid $\mathcal{G}$ and $\mathcal{G}_{u}$, Parameter $\tau \ge 0$, Robust Safe Condition $\mathcal{C}_{\mathrm{s}}$, and Robust Unsafe Condition $\mathcal{C}_{\mathrm{u}}$. 
\While{$\mathcal{G} \neq \emptyset$}
\For{$\forall g_{i} = \mathcal{B}_{r}
(x) \in \mathcal{G}, i \in \mathbb{N}$ for some $r$ and $x$}
\State $\mathcal{G} \gets \mathcal{G} - \{g_{i}\}$

\State \hspace{-0.9em}$\mathcal{G}$, $\mathcal{G}_s$, $\mathcal{G}_{u}$ = $\mathrm{SafetyCheck}(g_i, \mathcal{G}, \mathcal{G}_u, \tau, \mathcal{C}_{s}, \mathcal{C}_{u})$
\EndFor
\EndWhile
\State \Return $\mathcal{G}_{s}$, $\mathcal{G}_{u}$
\end{algorithmic}
\end{algorithm}

The verification process in Algorithm~\ref{alg:verifycells} can be done in parallel for all cells in the input $\mathcal{G}$ and ends when this set is empty. This framework facilitates high parallelism through concurrent cell verification while ensuring rigorous safety guarantees. That is to say, each cell in Algorithm~\ref{alg:verifycells} is eventually verified to be safe or declared to be unsafe by employing the safe and unsafe assignment conditions, $\mathcal{C}_{s}$ and $\mathcal{C}_{u}$, corresponding to each stage. The specific verification of each cell is implemented by routines $\mathrm{SafetyCheck}$ (Algorithm~\ref{alg:unifiedsafetycheck}). 

\begin{algorithm}[H]
\caption{SafetyCheck($g_i, \mathcal{G}, \mathcal{G}_{u}, \tau, \mathcal{C}_{s}, \mathcal{C}_{u}$)}
\label{alg:unifiedsafetycheck}
\begin{algorithmic}[1]
\State \textbf{Input:} Cell $g_i$ to check; Grids $\mathcal{G}, \mathcal{G}_{s}, \mathcal{G}_{u}$; Parameter $\tau > 0$; Robust Safe Condition $\mathcal{C}_{s}$; Robust Unsafe Condition $\mathcal{C}_{u}$.
\State $\mathcal{X}_{u} = \mathcal{G}_{u}$
\State $\mathcal{G}_{s} = \emptyset$
\State Sample $n_s$ trajectories of length $\tau$ from a representative point in $g_i$
\If{all sampled trajectories satisfy $\mathcal{C}_u$}
    \State $\mathcal{G}_u \gets \mathcal{G}_u \cup \{g_i\}$
\ElsIf{at least one sampled trajectory satisfies $\mathcal{C}_s$}
    \State $\mathcal{G}_s \gets \mathcal{G}_s \cup \{g_i\}$
\Else      
    \State $\mathcal{G} \gets \mathcal{G} \cup \mathrm{SplitCell}(g_i)$
\EndIf
\State \Return $\mathcal{G}, \mathcal{G}_{s}, \mathcal{G}_{u}$
\end{algorithmic}
\end{algorithm}

Notably, to verify stages 2 and 3, each cell is required to sample $n_s$ trajectories of length $\tau$ and check whether all satisfy an unsafe condition or at least one satisfies the safe condition. Finally, in cases where neither safe nor unsafe conditions can be verified, one is required to either increase the resolution via the SplitCell routine (Algorithm~\ref{alg:split_a_cell}) or eventually declare the cell to be unsafe when the resolution is met.

\begin{algorithm}[ht]
\caption{SplitCell($g$)}
\label{alg:split_a_cell}
\begin{algorithmic}[1]
\State \textbf{Input:} Grid cell $g = \mathcal{B}_r(x) \in \mathcal{G}$
\State Let $(x_1, x_2, \dots, x_n) = x$
\State $P := \left\{ x + \frac{2r}{3} \cdot \boldsymbol{\delta} | \boldsymbol{\delta} \in \{-1, 0, 1\}^n \right\}$
\State \Return $\mathcal{G}_{split} := \{ \mathcal{B}_{\frac{r}{3}}(p)|p \in P\}$
\end{algorithmic}
\end{algorithm}
\section{Numerical Simulations}
\label{sec:simulations}

In this section, we validate the performance and the safety of our algorithm using a 3D evasion problem:

\begin{align*}
\dot{x} = \frac{d}{dt}
\begin{bmatrix}
x_1 \\ x_2 \\ x_3
\end{bmatrix}
=
\begin{bmatrix}
-v + v \cos x_3 + ux_2 \\
v \sin x_3 - ux_1 \\
- u
\end{bmatrix},
\end{align*}
with $[x_1, x_2]^T \in \mathbb{R}^2$ representing the relative planar location and $x_3 \in [0, 2\pi]$ the relative direction. $v \geq 0$ is the aircraft velocity and $u \in [-1,1]$ is the evader's angular velocity. A collision occurs if $\sqrt{x_1^2 + x_2^2} \leq 1$,
which defines a cylindrical collision set of radius 1 along the \(x_3\)-axis. Our goal is to determine the set of initial states that inevitably lead to a collision, regardless of the evader's actions.

\subsection{Results Comparison}
We use $\tau = 1$ s and a total number of control samples per cell $n_s = 3000$.  $V_{BRT}$ denotes the unsafe region volume computed via HJ reachability at $r_{\min} = 0.041$ (the grid resolution). We calculate the intersection ratio $(V_{BRT\cap h'\leq0} /V_{BRT})$ between our method ($\beta = \alpha = 0.05$) and HJ solutions at different precisions. As shown in Table~\ref{tab:hj_rea_diff_r} and Figure~\ref{fig:hj_diff_r}, low-precision HJ analysis underestimates unsafe regions, risking safety misjudgment. Our method guarantees complete containment of true unsafe regions at all precisions.

\begin{table}[ht]
    \caption{Comparison of the Fraction of the Unsafe Zone Volume Captured by Different Methods at  Different Precision}
    \centering
    \begin{tabular}{|c|c|c|c|c|}
        \hline
        Methods $\backslash$ $r_{\min}$ & 1.111 & 0.370 & 0.123 & 
        0.041\\
        \hline
        HJ BRT~\cite{hj_reachability} & 0.510 (\ding{55}) & 0.900 (\ding{55}) & 0.953 (\ding{55}) & 1 (\ding{51})\\
        \hline
        Recurrent Set & 1 (\ding{51}) & 1 (\ding{51}) & 1 (\ding{51}) & 1 (\ding{51})\\
        \hline
    \end{tabular}
    \label{tab:hj_rea_diff_r}
\end{table}

Beyond safety guarantees, Table~\ref{tab:hj_reachability_diff_precision} demonstrates our method's faster computation time at high precision through parallelization. 

\begin{table}[ht]
\caption{Comparison of Computation Time between HJ reachability and Recurrent Set Approximation under Different Precision, $\tau = 1$ s, $\alpha = 1$, $n_{\mathrm{s}} = 3000$}
    \centering
    \begin{tabular}{|c|c|c|c|c|}
        \hline
        Method $\backslash$ $r_{\min}$  & 1.111 & 0.370 & 0.123 & 0.041 \\
        \hline
        HJ BRT~\cite{hj_reachability} & 0.02 s (\ding{55}) & 0.19 s (\ding{55}) & 2.33 s (\ding{55}) & 83.73 s (\ding{51})\\
        \hline
        Recurrent Set & \textbf{0.13} s (\ding{51}) & \textbf{0.61} s (\ding{51})& \textbf{3.19} s (\ding{51})& \textbf{19.75} s (\ding{51})\\
        \hline
    \end{tabular}
    \label{tab:hj_reachability_diff_precision}
\end{table}
\vspace{-1.5em}
\begin{figure}[htbp]
  \centering
  \hspace{-3em}
  \subfigure[HJ Reachability]{
    \includegraphics[width=0.5\columnwidth]{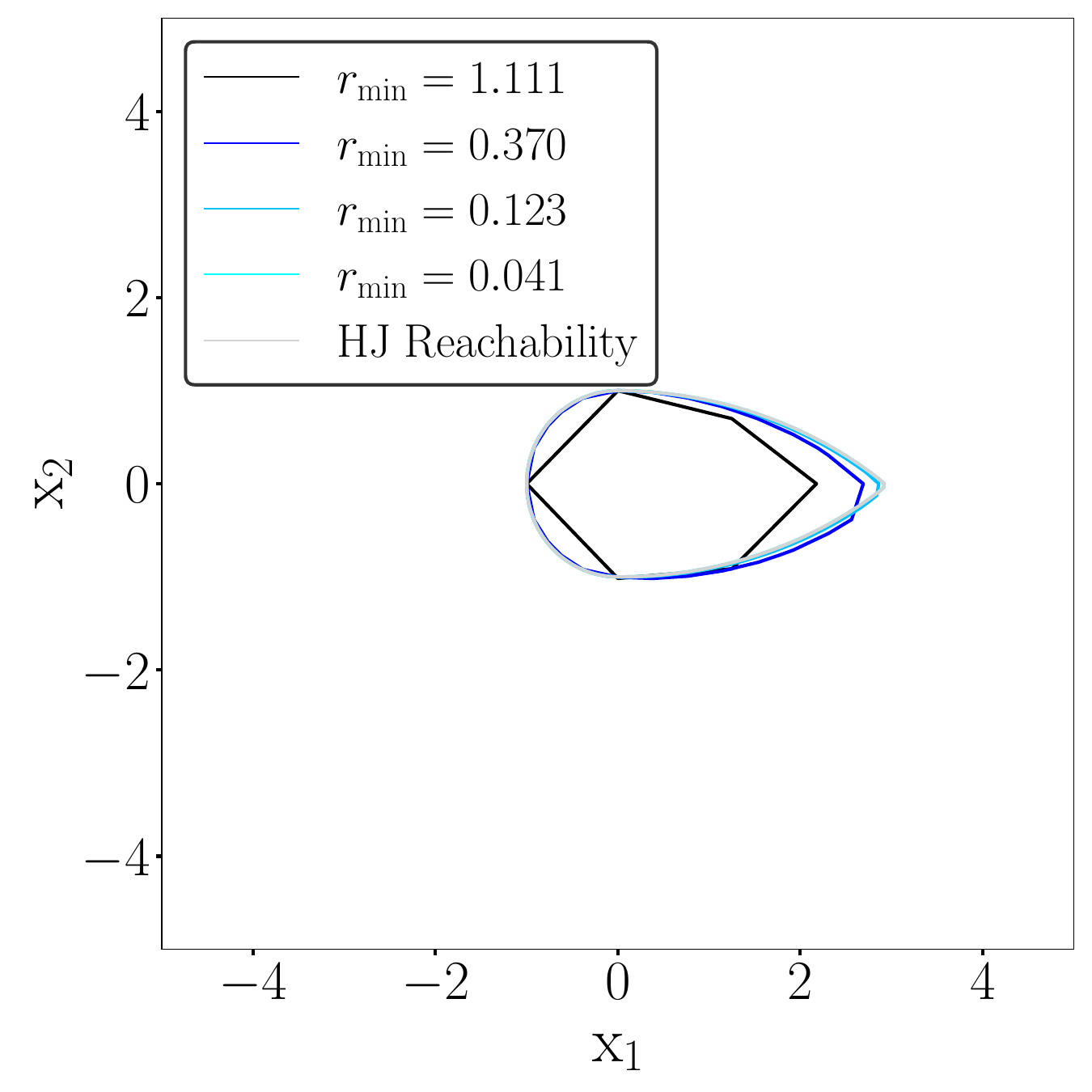}
    \label{fig:subfig1}}
  \hspace{-1.2em}
  \subfigure[Recurrent Set Approximation]{
    \includegraphics[width=0.56\columnwidth]{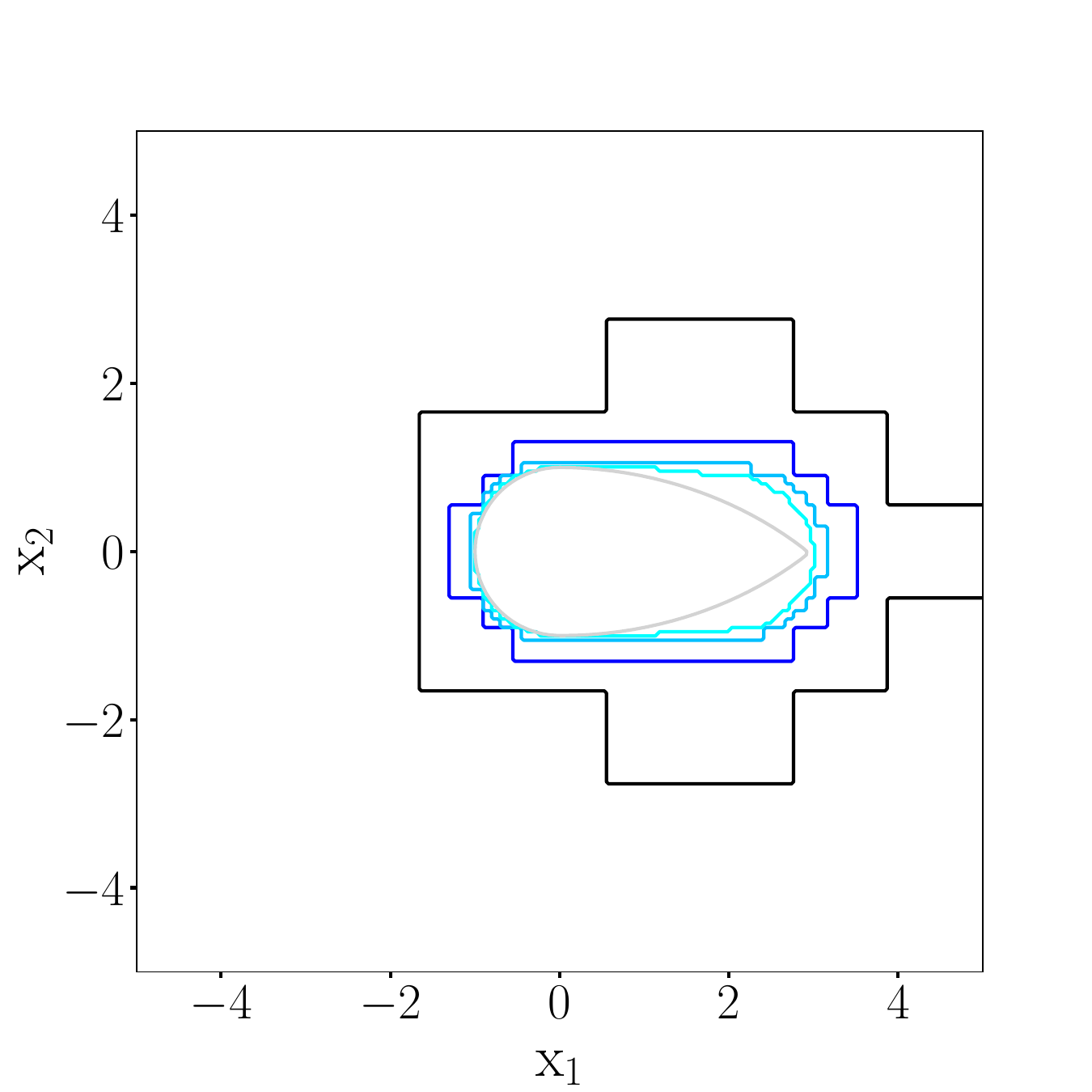}
    \label{fig:subfig2}}
  \hspace{-4.5em}
  \caption{Contour Plot of the Boundary of the Unsafe Region with Different Precision and Different Methods when $x_3 = \pi$}
  \label{fig:hj_diff_r}
\end{figure}

\subsection{Ablation Study}

Definition~\ref{def:recurrence} shows the recurrent set converges to the invariant set as $\tau\!\to\!0$. To gauge parameter effects on RCBF performance, we ran a sweep over two metrics: the normalized volume gap $(V_{\tau}-V_{\mathrm{BRT}})/V_{\mathrm{BRT}}$ and computation time $t$. Figure~\ref{fig:RCBF_charac} indicates an inverse $\tau$–accuracy trade-off: smaller $\tau$ reduces the volume gap but drives computation time up (roughly exponentially). Nonetheless, runtimes remain practical and safety is preserved for all tested parameters.

\begin{figure}[htbp]
  \centering
  \hspace{-2.8em}
  \subfigure[Volume Difference]{
    \includegraphics[width=0.5\columnwidth,height=0.16\textheight]{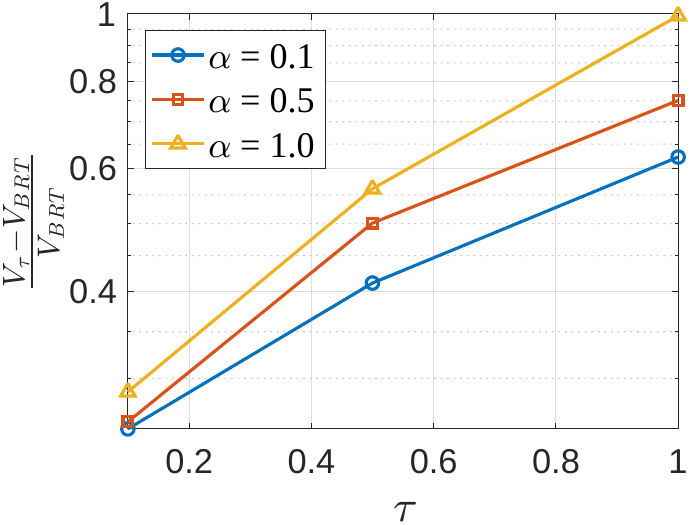}
    \label{fig:subfig1_charac}
  }
  \hspace{-1.1em}
  \subfigure[Computation Time]{
    \includegraphics[width=0.5\columnwidth,height=0.157\textheight]{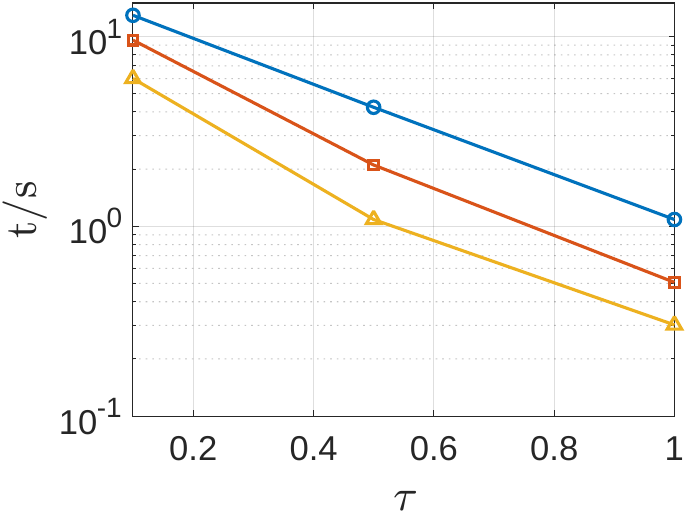}
    \label{fig:subfig2_charac}
  }
  \hspace{-2.5em}
  \caption{Volume gap and computation time versus $\tau$ (and $\alpha$); $n_{\mathrm{s}}{=}3000$, $r_{\min}{=}0.370$.}
  \label{fig:RCBF_charac}
\end{figure}

\vspace{-1em}
\section{Conclusion and Discussion}
\label{sec:conclusion}

We introduced Recurrent Control Barrier Functions (RCBFs), generalizing CBFs by enforcing finite-time ($\tau$) return rather than strict invariance. We proved that the signed distance to a $\tau$-recurrent set is a valid RCBF, yielding rigorous safety guarantees. A sampling-based algorithm approximates the safe region. Simulations demonstrate provably safe, though over-approximated, sets with competitive computational performance.

An approximation gap persists between computed and true safe sets. While denser sampling improves accuracy, the precise link between sampling parameters and error remains open. Future work will quantify this relationship and develop corresponding models to guide adaptive sampling for tighter guarantees.

\printbibliography

@inproceedings{sm2025nahs,
	abstract = {In this paper, we introduce the notion of recurrence entropy in the context of nonlinear control systems. A set is said to be ($\tau$-)recurrent if every trajectory that starts in the set returns to it (within at most $\tau$ units of time). The recurrence entropy of a control system quantifies the complexity of making a set $\tau$-recurrent measured by the average rate of growth, as time increases, of the number of control signals required to achieve this goal. Our analysis reveals that, compared to invariance, recurrence is quantitatively less complex, meaning that the recurrence entropy of a set is no larger than, and often strictly smaller than, the invariance entropy. We provide upper and lower bounds on recurrence entropy and show that they converge to the bounds on invariance entropy as $\tau$ decreases to zero. Further, our results show that recurrence entropy lower bounds the minimum data rate between the sensor and controller required for achieving recurrence. We present an algorithm according to which the sensor can send state estimates to the controller over a limited-bandwidth channel to achieve recurrence asymptotically at an exponential rate. Finally, we show that, under mild stricter conditions on the set and dynamics, the control signals that enforce the $\tau$-recurrence of a set can be generated by a finite alphabet of control signals of durations of at most $\tau$ units of time, which allows us to store them for quick online execution.
},
	author = {Sibai, Hussein and Mallada, Enrique},
	booktitle = {Nonlinear Analysis: Hybrid Systems},
	date-added = {2025-02-08 15:46:30 -0500},
	date-modified = {2025-02-08 15:50:26 -0500},
	grants = {CPS-2136324, Global-Centers-2330450},
	month = {02},
	pages = {1--16},
	pubstate = {submitted},
	record = {submitted Feb 2025},
	title = {Recurrence of Nonlinear Control Systems: Entropy, Bit Rates, and Finite Alphabets},
	url = {http://mallada.ece.jhu.edu/pubs/2025-Preprint-SM.pdf},
	year = {2025},
	bdsk-url-1 = {http://mallada.ece.jhu.edu/pubs/2024-Preprint-SM.pdf}}

@inproceedings{ssm2024allerton,
	abstract = {Barrier functions constitute an effective tool for assessing and enforcing safety-critical constraints on dynamical systems.  To this end, one is required to find a function $h$ that satisfies a Lyapunov-like differential condition, thereby ensuring the invariance of its zero super-level set $h_{\ge 0}$.  This methodology, however, does not prescribe a general method for finding the function $h$ that satisfies such differential conditions, which, in general, can be a daunting task. In this paper, we seek to overcome this limitation by developing a generalized barrier condition that makes the search for $h$ easier. We do this in two steps. First, we develop integral barrier conditions that reveal equivalent asymptotic behavior to the differential ones, but without requiring differentiability of $h$. Subsequently, we further replace the stringent invariance requirement on $h\geq0$ with a more flexible concept known as recurrence. A set is ($\tau$-)recurrent if every trajectory that starts in the set returns to it (within $\tau$ seconds) infinitely often. We show that, under mild conditions, a simple sign distance function can satisfy our relaxed condition and that the ($\tau$-)recurrence of the super-level set $h_{\geq 0}$ is sufficient to guarantee the system's safety.
},
	author = {Shen, Yue and Sibai, Hussein and Mallada, Enrique},
	booktitle = {60th Allerton Conference on Communication, Control, and Computing},
	date-added = {2024-07-20 15:28:49 -0400},
	date-modified = {2024-12-01 18:38:37 -0500},
	grants = {CPS-2136324, Global-Centers-2330450},
	keywords = {Barrier Functions},
	month = {09},
	pages = {1-8},
	record = {accepted Jul 2024, submitted Jul 2024},
	title = {Generalized Barrier Functions: Integral Conditions \& Recurrent Relaxations},
	url = {http://mallada.ece.jhu.edu/pubs/2024-Allerton-SSM.pdf},
	year = {2024},
	bdsk-url-1 = {http://mallada.ece.jhu.edu/pubs/2017-Allerton-PM.pdf}}

@inproceedings{sspm2023cdc,
	abstract = {Lyapunov direct method is a powerful tool that provides a rigorous framework for stability analysis and control design for dynamical systems. A critical step that enables the application of the method is the existence of a Lyapunov function $V$---a function whose value monotonically decreases along the trajectories of the dynamical system. Unfortunately, finding a Lyapunov function is often tricky and requires ingenuity, domain knowledge, or significant computational power. At the core of this challenge is the fact that the method requires every sub-level set of $V$ ($V_{\leq c}$) to be forward invariant, thus implicitly coupling the geometry of $V_{\leq c}$ and the trajectories of the system. In this paper, we seek to disentangle this dependence by developing a direct method that substitutes the concept of invariance with a more flexible notion known as recurrence. A set is ($\tau$-)recurrent if every trajectory that starts in the set returns to it (within $\tau$ seconds) infinitely often. We show that, under mild conditions,  the recurrence of level sub-level sets is sufficient to guarantee stability, asymptotic stability, and exponential stability. We further provide a GPU-based algorithm that can to verify whether $V$ satisfies such conditions up to an arbitrarily small subset of the equilibrium.
},
	author = {Siegelmann, Roy and Shen, Yue and Paganini, Fernando and Mallada, Enrique},
	booktitle = {62nd IEEE Conference on Decision and Control (CDC)},
	date-added = {2024-04-24 08:40:03 -0400},
	date-modified = {2024-07-12 10:12:57 -0400},
	doi = {10.1109/CDC49753.2023.10383373},
	grants = {CPS-2136324, CAREER-1752362, EPICS-2330450},
	month = {12},
	organization = {IEEE},
	pages = {6665--6672},
	record = {presented, accepted Jul 2023, submitted Mar 2023},
	title = {A Recurrence-based Direct Method for Stability Analysis and GPU-based Verification of Non-monotonic Lyapunov Functions},
	url = {https://mallada.ece.jhu.edu/pubs/2023-CDC-SSPM.pdf},
	year = {2023},
	bdsk-url-1 = {https://mallada.ece.jhu.edu/pubs/2023-Preprint-SSPM.pdf}}

@inproceedings{sm2024hscc,
	abstract = {In this paper, we introduce the notion of recurrence entropy in the context of nonlinear control systems. A set is said to be (tau-)recurrent if every trajectory that starts in the set returns to it (within at most $\tau$ units of time). Recurrence entropy quantifies the complexity of making a set tau-recurrent measured by the average rate of growth, as time increases, of the number of control signals required to achieve this goal. Our analysis reveals that, compared to invariance, recurrence is quantitatively less complex, meaning that the recurrence entropy of a set is no larger than, and often strictly smaller than, the invariance entropy. Our results further offer insights into the minimum data rate required for achieving recurrence. We also present an algorithm for achieving recurrence asymptotically.
},
	address = {New York, NY, USA},
	author = {Sibai, Hussein and Mallada, Enrique},
	booktitle = {Proceedings of the 27th ACM International Conference on Hybrid Systems: Computation and Control (HSCC)},
	date-added = {2023-11-01 09:23:21 -0400},
	date-modified = {2024-07-11 11:11:32 -0400},
	doi = {https://doi.org/10.1145/3641513.3650121},
	grants = {CPS-2136324, Global-Centers-2330450},
	month = {05},
	number = {23},
	pages = {1--9},
	publisher = {Association for Computing Machinery},
	record = {accepted Jan 2024, submitted Nov 2023},
	series = {HSCC '24},
	title = {Recurrence of Nonlinear Control Systems: Entropy and Bit Rates},
	url = {http://mallada.ece.jhu.edu/pubs/2024-HSCC-SM.pdf},
	year = {2024},
	bdsk-url-1 = {http://mallada.ece.jhu.edu/pubs/2024-Preprint-SM.pdf}}

@inproceedings{sbm2022cdc,
	abstract = {We consider the problem of learning an inner approximation of the region of attraction (ROA) of an asymptotically stable equilibrium point without an explicit model of the dynamics. Rather than leveraging approximate models with bounded uncertainty to find a (robust) invariant set contained in the ROA, we propose to learn sets that satisfy a more relaxed notion of containment known as recurrence. We define a set to be $\tau$-recurrent (resp. $k$-recurrent) if every trajectory that starts within the set, returns to it after at most $\tau$ seconds (resp. $k$ steps). We show that under mild assumptions a $\tau$-recurrent set containing a stable equilibrium must be a subset of its ROA. We then leverage this property to develop algorithms that compute inner approximations of the ROA using  counter-examples of recurrence that are obtained by sampling finite-length trajectories. Our algorithms process samples sequentially, which allow them to continue being executed even after an initial offline training stage. We further provide an upper bound on the number of counter-examples used by the algorithm, and almost sure convergence guarantees.},
	author = {Shen, Yue and Bichuch, Maxim and Mallada, Enrique},
	booktitle = {61st IEEE Conference on Decision and Control (CDC)},
	date-added = {2021-12-09 10:44:22 -0500},
	date-modified = {2023-04-02 20:35:30 -0400},
	doi = {10.1109/CDC51059.2022.9993280},
	grants = {CAREER-1752362, TRIPODS-1934979, CPS-2136324},
	month = {12},
	pages = {4714-4719},
	record = {presented Dec. 2022, accepted Sep 2022, submitted Mar 2022},
	title = {Model-free Learning of Regions of Attraction via Recurrent Sets},
	url = {http://mallada.ece.jhu.edu/pubs/2022-CDC-SBM.pdf},
	year = {2022},
	bdsk-url-1 = {http://mallada.ece.jhu/pubs/2021-Preprint-CMBMb.pdf}}

@inproceedings{acenst2019ecc,
  title={Control barrier functions: Theory and applications},
  author={Ames, Aaron D and Coogan, Samuel and Egerstedt, Magnus and Notomista, Gennaro and Sreenath, Koushil and Tabuada, Paulo},
  booktitle={2019 18th European control conference (ECC)},
  pages={3420--3431},
  year={2019},
  organization={IEEE}
}

@article{mbt2005tac,
  title={A time-dependent Hamilton-Jacobi formulation of reachable sets for continuous dynamic games},
  author={Mitchell, Ian M and Bayen, Alexandre M and Tomlin, Claire J},
  journal={IEEE Transactions on automatic control},
  volume={50},
  number={7},
  pages={947--957},
  year={2005},
  publisher={IEEE}
}

@inproceedings{c2021cdc,
  title={Verification and synthesis of control barrier functions},
  author={Clark, Andrew},
  booktitle={2021 60th IEEE Conference on Decision and Control (CDC)},
  pages={6105--6112},
  year={2021},
  organization={IEEE}
}

@inproceedings{dp2023acc,
  title={Convex synthesis and verification of control-Lyapunov and barrier functions with input constraints},
  author={Dai, Hongkai and Permenter, Frank},
  booktitle={2023 American Control Conference (ACC)},
  pages={4116--4123},
  year={2023},
  organization={IEEE}
}

@inproceedings{bcht2017cdc,
  title={Hamilton-jacobi reachability: A brief overview and recent advances},
  author={Bansal, Somil and Chen, Mo and Herbert, Sylvia and Tomlin, Claire J},
  booktitle={2017 IEEE 56th Annual Conference on Decision and Control (CDC)},
  pages={2242--2253},
  year={2017},
  organization={IEEE}
}

@article{c2024tac,
  title={A semi-algebraic framework for verification and synthesis of control barrier functions},
  author={Clark, Andrew},
  journal={IEEE Transactions on Automatic Control},
  year={2024},
  publisher={IEEE}
}

@inproceedings{pj2004whscc,
  title={Safety verification of hybrid systems using barrier certificates},
  author={Prajna, Stephen and Jadbabaie, Ali},
  booktitle={International Workshop on Hybrid Systems: Computation and Control},
  pages={477--492},
  year={2004},
  organization={Springer}
}

@article{xwhcalr2023tro,
  title={Barriernet: Differentiable control barrier functions for learning of safe robot control},
  author={Xiao, Wei and Wang, Tsun-Hsuan and Hasani, Ramin and Chahine, Makram and Amini, Alexander and Li, Xiao and Rus, Daniela},
  journal={IEEE Transactions on Robotics},
  volume={39},
  number={3},
  pages={2289--2307},
  year={2023},
  publisher={IEEE}
}

@inproceedings{lld2023corl,
  title={Safe control under input limits with neural control barrier functions},
  author={Liu, Simin and Liu, Changliu and Dolan, John},
  booktitle={Conference on Robot Learning},
  pages={1970--1980},
  year={2023},
  organization={PMLR}
}

@inproceedings{ssmgrrf2024icra,
  title={How to train your neural control barrier function: Learning safety filters for complex input-constrained systems},
  author={So, Oswin and Serlin, Zachary and Mann, Makai and Gonzales, Jake and Rutledge, Kwesi and Roy, Nicholas and Fan, Chuchu},
  booktitle={2024 IEEE International Conference on Robotics and Automation (ICRA)},
  pages={11532--11539},
  year={2024},
  organization={IEEE}
}

@misc{hj_reachability,
  title        = {hj\_reachability: Hamilton-Jacobi reachability analysis in JAX},
  author       = {StanfordASL},
  year         = {2024},
  publisher    = {GitHub},
  howpublished = {\url{https://github.com/StanfordASL/hj_reachability}}
}

@Book{fb2024book,
  author =    {F. Bullo},
  title =     {Contraction Theory for Dynamical Systems},
  year =      2024,
  edition =   {{1.2}},
  publisher = {Kindle Direct Publishing},
  ISBN =      {979-8836646806},
  url =       {https://fbullo.github.io/ctds},
}

@article{folkestad2020data,
  title={Data-driven safety-critical control: Synthesizing control barrier functions with Koopman operators},
  author={Folkestad, Carl and Chen, Yuxiao and Ames, Aaron D and Burdick, Joel W},
  journal={IEEE Control Systems Letters},
  volume={5},
  number={6},
  pages={2012--2017},
  year={2020},
  publisher={IEEE}
}

@inproceedings{lee2024data,
  title={A data-driven method for safety-critical control: Designing control barrier functions from state constraints},
  author={Lee, Jaemin and Kim, Jeeseop and Ames, Aaron D},
  booktitle={2024 American Control Conference (ACC)},
  pages={394--401},
  year={2024},
  organization={IEEE}
}

@inproceedings{bansal2021deepreach,
  title={Deepreach: A deep learning approach to high-dimensional reachability},
  author={Bansal, Somil and Tomlin, Claire J},
  booktitle={2021 IEEE International Conference on Robotics and Automation (ICRA)},
  pages={1817--1824},
  year={2021},
  organization={IEEE}
}

\ifthenelse{\boolean{arxiv}}{}{
\appendix
\subsection{Proof of Lemma~\ref{lemma:triangle_inequality}}
\label{lemma:tran_ineq}
\begin{proof}
    According to the assumption of system~\eqref{eq:Control_System}, since the system~\eqref{eq:Control_System} is uniformly continuous in $u$, thus $L_{u} = \max\limits_{t\in (0, \tau]} \|g(\phi(t,x,u))\| := \max\limits_{\|v\| = 1, t\in (0,\tau]}\|g(\phi(t,x,u)) v\| > 0$ exists. And the system~\eqref{eq:Control_System} is uniformly continuous in $u$, and Lipschitz continuous in $x$ for fixed control $u$, with a little abuse of the notation, for all the states $x'$ and $y'$ in trajectories $\phi(t,x,u)$ and $\phi(t,y,u)$, $\forall t\in (0,\tau]$ we have:
\begin{align*}
    & \|F(x, u) - F(y, u)\| \leq L \|x - y\|\\
    & \|F(x, u) - F(x, v)\| = \|g(x)(u-v)\| \leq L_{u}\|u-v\|
\end{align*}
Thus, according to Corollary 3.17 and Gr{\"o}nwall Comparison Lemma in~\cite{fb2024book}, we have: 
\begin{align*}
        & \|\phi(t,x,u) - \phi(t,y,u)\| \notag\\
        & \leq e^{Lt} \|x - y\| + L_{u} \int_{0}^{t}e^{L(t-s)}\|u(s) - u(s)\|ds\\
        & = e^{Lt}\|x-y\| \\
        & \leq re^{Lt},
\end{align*}
where equality is held because two trajectories have the same input trajectory. 

Suppose $x^{*}=\arg\min_{x^* \in \partial S} \mathrm{sd}(\phi(t,x,u), \mathcal{S}), y^{*}=\arg\min_{y^* \in \partial S} \mathrm{sd}(\phi(t,y,u), \mathcal{S})$. The three cases are analyzed as follows:

\textbf{Case 1}: $\phi(t,x,u)$ and $\phi(t,y,u)$ are both in $S$, and then we have
\begin{align*}
 & |\mathrm{sd}(\phi(t,x,u),S) - \mathrm{sd}(\phi(t,y,u), S)| \\
 = & |\|\phi(t,x,u) - x^{*}\| - \|\phi(t,y,u)-y^*\||\\
\le & |\|\phi(t,x,u) - x^{*}\| - \|\phi(t,y,u)-x^*\||\\
\le & \|\phi(t,x,u) - \phi(t,y,u)\| \\
\le & r e^{Lt},
\end{align*}
where the first equality follows from the definition, and the first inequality follows from the triangle inequality.

\textbf{Case 2}: $\phi(t,x,u)$ and $\phi(t,y,u)$ are both not in $S$, and then with the same reason, similarly, we have
\begin{align*}
 & |\mathrm{sd}(\phi(t,x,u),S) - \mathrm{sd}(\phi(t,y,u), S)|\\
 = & |\|\phi(t,x,u) - x^{*}\| - \|\phi(t,y,u)-y^*\||\\
\le & |\|\phi(t,x,u) - x^{*}\| - \|\phi(t,y,u)-x^*\||\\
\le & \|\phi(t,x,u) - \phi(t,y,u)\| \\
\le & re^{Lt},
\end{align*}

\textbf{Case 3}: One of $\phi(t,x,u)$ and $\phi(t,y,u)$ is in $S$ and the other not in $S$. Without loss of generality, we can assume that $\phi(t,x,u)$ is in $S$ and $\phi(t,y,u)$ is not in $S$. Then there at least exists a $\lambda \in [0, 1]$ such that $\lambda \phi(t,x,u) + (1-\lambda)\phi(t,y,u) \in \partial S$ and we denote $p^{*} := \lambda\phi(t,x,u) + (1-\lambda)\phi(t,y,u) \in \partial S$, thus, we have
\begin{align*}
 & |\mathrm{sd}(\phi(t,x,u),S) - \mathrm{sd}(\phi(t,y,u), S)|\\
 = & \|\phi(t,y,u)-y^*\| + \|\phi(t,x,u) - x^{*}\|\\
\leq &  \|\phi(t,x,u) - p^{*}\| + \|\phi(t,y,u)-p^*\||\\
= & \|\phi(t,x,u) - \phi(t,y,u)\|\\
\leq & re^{Lt},
\end{align*}
where the first equality and the first inequality follow from the definition of the signed distance function, and the second equality follows from the definition of $p^{*}$. In all cases, we obtain $|\mathrm{sd}(\phi(t,x,u),S) - \mathrm{sd}(\phi(t,y,u), S)| \leq re^{Lt}$ as required.
\end{proof}
}
\end{document}